\newcommand{\Rmnum}[1]{\expandafter\@slowromancap\romannumeral #1@}
\newtheorem{theorem}{Theorem}
\newtheorem{remark}{Remark}
\newtheorem{corollary}{Corollary}
\providecommand{\propositionname}{Proposition}
\patchcmd{\maketitle}{\@fnsymbol}{\@alph}{}{}  
\title{Coded Caching for a Large Number Of Users}
\author{
  Mohammad Mohammadi Amiri, Qianqian Yang, and\thanks{The authors are with Imperial College London, London SW7 2AZ, U.K. (e-mail: m.mohammadi-amiri15@imperial.ac.uk; q.yang14@imperial.ac.uk; d.gunduz@imperial.ac.uk).}
  \and
  Deniz G\"und\"uz
}
\date{}
\begin{document}

\maketitle

\begin{abstract}
Information theoretic analysis of a coded caching system is considered, in which a server with a database of $N$ equal-size files, each $F$ bits long, serves $K$ users. Each user is assumed to have a local cache that can store $M$ files, i.e., capacity of $MF$ bits. \textit{Proactive caching} to user terminals is considered, in which the caches are filled by the server in advance during the \textit{placement phase}, without knowing the user requests. Each user requests a single file, and all the requests are satisfied simultaneously through a shared error-free link during the \textit{delivery phase}. 


First, \textit{centralized coded caching} is studied assuming both the number and the identity of the active users in the delivery phase are known by the server during the placement phase. A novel group-based centralized coded caching (GBC) scheme is proposed for a cache capacity of $M = N/K$. It is shown that this scheme achieves a smaller delivery rate than all the known schemes in the literature. The improvement is then extended to a wider range of cache capacities through memory-sharing between the proposed scheme and other known schemes in the literature. Next, the proposed centralized coded caching idea is exploited in the \textit{decentralized} setting, in which the identities of the users that participate in the delivery phase are assumed to be unknown during the placement phase. It is shown that the proposed decentralized caching scheme also achieves a delivery rate smaller than the state-of-the-art. Numerical simulations are also presented to corroborate our theoretical results. 
\\ 
\end{abstract}

\begin{IEEEkeywords}
Network coding, centralized coded caching, decentralized coded caching, index coding, proactive caching.
\end{IEEEkeywords}

\section{Introduction}\label{Intro}
There has been a recent revival of interest in \textit{content caching}, particularly focusing on wireless networks. This interest stems from a very practical problem: exponential growth in mobile traffic cannot be matched by the increase in the spectral efficiency of wireless networks. This, in turn, leads to congestion in the radio access as well as the backhaul links, and increased delay and outages for users. \textit{Proactively caching} popular contents at the network edge during off-peak hours has been recently proposed as a potential remedy for this problem (see \cite{AlmerothCacing,GolrezaeiFemtocaching,MaddahAliCentralized,GregoryDtoD,JiArXivNonuniform}, and references therein). Proactive caching shifts traffic from peak to off-peak hours, reduces latency for users, and potentially provides energy savings. In this paper, we focus on the coded caching model proposed in \cite{MaddahAliCentralized}, which considers a single server holding a database of $N$ popular contents of equal size ($F$ bits), serving a set of $K$ users, which have local storage space, sufficient to hold $M$ files, that can be used to proactively cache content during off-peak hours.

Caching in this model consists of two distinct phases: In the first phase, which takes place during off-peak periods, i.e., when the network is not congested, caches at user terminals are filled by the server. This first phase is called the \emph{placement phase}. The only constraint on the data transmitted and stored in a user cache in this phase is the cache capacity. However, due to the ``proactive'' nature of the cache placement, it is carried out without the knowledge of the particular user requests. A shared communication channel is considered to be available from the server to all the users during the peak traffic period. Once the user demands are revealed in this period, the server broadcasts additional information over the common error-free channel in order to satisfy all the user requests simultaneously. This constitutes the \emph{delivery phase}. Since the delivery phase takes place during peak traffic period, the goal is to minimize the rate of transmission over the shared link, called the \textit{delivery rate}, by exploiting the contents that are available the caches.   

Over the past decade, research on caching has mainly focused on the placement phase; the goal has been to decide which contents to cache, typically at a server that serves many users, by anticipating future demands based on the history (see \cite{baev2008approximation, BorstCaching, Blasco:ISIT:14}, and references therein). In our model, due to the uniform popularity of the files in the database, this \textit{uncoded} caching approach simply stores an equal portion of each file (i.e., fraction of $M/N$), in each cache. The delivery rate of the uncoded caching scheme, when the user requests are as distinct as possible, i.e., the worst-case rate, is given by 
\begin{align}\label{rate_uncoded}
    R_{\rm{U}}(M) &= K \cdot \left(1 - \frac{M}{N} \right) \cdot \min\left\{1, \frac{N}{K} \right\}.
\end{align}
The gain from this conventional uncoded caching approach, compared to not having any caches at the users, i.e., the factor $(1-M/N)$ in (\ref{rate_uncoded}), derives mainly from the availability of popular contents locally.

On the other hand, the \emph{coded caching} approach proposed in \cite{MaddahAliCentralized} jointly designs the placement and delivery phases in a \textit{centralized} manner; that is, a central server, which knows the number and identity of the users in the system, can arrange the placement and delivery phases in coordination across the users. This coded caching approach, thanks to the presence of a common broadcast channel to all the users, can significantly reduce the backhaul traffic over conventional uncoded caching by creating multicasting opportunities, even when users request different files. The following delivery rate is achieved by the centralized coded caching scheme proposed in~\cite{MaddahAliCentralized}, referred to as the MAN scheme in the rest of the paper:
\begin{align}\label{rate_coded_MAN}
    R^{\rm{C}}_{\rm{MAN}}(M) &= K \cdot (1 - M/N) \cdot \min\left\{\frac{1}{1+KM/N}, \frac{N}{K}\right\}.
\end{align}

Following the seminal work of \cite{MaddahAliCentralized}, further research on centralized coded caching have been carried out recently to reduce the required delivery rate. Authors in \cite{ZhiChenXOR} have proposed a coded placement phase, referred to as the CFL scheme in this paper. When the number of users, $K$, is at least as large as the number of files in the system, i.e., $K \ge N$, the following delivery rate can be achieved for a cache capacity of $M=1/K$ by the CFL scheme:
\begin{align}\label{rate_coded_CFL}
    R^{\rm{C}}_{\rm{CFL}} \left( \frac{1}{K} \right) &= N \left( 1 - \frac{1}{K} \right).
\end{align}
It is also shown in \cite{ZhiChenXOR} that the proposed scheme is indeed optimal for small cache capacities; that is, when $M \leq 1/K$.

By employing coded multicasting opportunities across users with the same demand, an improved centralized caching scheme for any cache capacity $0 \le M \le N$ is presented in \cite{KaiWanUncodedCaching} for the special case $K > N=2$. Again, considering more users in the system than the files in the database, i.e., $K> N$, the authors in \cite{MohammadDenizTCom} have further reduced the delivery rate by exploiting a novel coded placement scheme, when $4 \le N < K \le 3N/2$, and $N$ and $K$ are not relatively prime. 

A theoretical lower bound on the delivery rate allows us to quantify how far the proposed caching schemes perform compared to the optimal performance. In addition to the cut-set bound studied in \cite{MaddahAliCentralized}, a tighter lower bound is derived in \cite{SenguptaCaching}. For the special case of $N=K=3$, a lower bound is derived through a novel computational approach in \cite{TianCaching}. A labeling problem in a directed tree, the complexity of which grows exponentially with the number of users, is considered in \cite{GhasemiCachingLowerBound}, to characterize another lower bound. However, none of these bounds are tight in general, and the optimal delivery rate-cache capacity trade-off for a caching system remains an open problem. 

In contrast to the above centralized setting, in many practical scenarios, the identity and number of active users in the delivery phase are not known in advance, and there may not be a central authority to coordinate the cache contents across users. Maddah-Ali and Niesen considered the corresponding \textit{decentralized} coded caching problem, and showed that coded caching can still help reduce the delivery rate \cite{MaddahAliDecentralized}. In the decentralized scenario, while the multicasting opportunities cannot be designed and maximized in advance, they will still appear even if the bits of files are randomly cached in the placement phase. The following delivery rate is achievable by the proposed decentralized caching scheme in \cite{MaddahAliDecentralized}:
\begin{align}\label{rate_dec_MAN}
    R^{\rm{D}}_{\rm{MAN}}(M) &= K \cdot \left(1 - \frac{M}{N} \right) \cdot \min\left\{\frac{N}{KM}\big(1-(1-M/N)^K\big), \frac{N}{K}\right\}.
\end{align}
We note that the delivery rate in (\ref{rate_dec_MAN}) provides the same local caching gain as in the centralized setting, i.e., the factor $(1-M/N)$, as well as an additional global gain thanks to coding. The centralized caching schemes proposed in \cite{KaiWanUncodedCaching} is also extended to the decentralized setting, and it is shown to reduce the required delivery rate. 

Coded caching have been studied under many other network settings in the recent years, such as online coded caching \cite{PedarsaniOnlineCaching}, multi-layer coded caching \cite{KaramchandaniHierarchical}, caching files with nonuniform distributions \cite{NiesenNonuniform, JiArXivNonuniform} and distinct sizes \cite{ZhangDistinctFileSizes}, user with distinct cache capacities \cite{WangHeterogenous}. It has also been extended to lossy reconstructions of cached files in \cite{QianDenizLossy, ElzaDistortionMemoryTradeoff, TimoDistortionCaching}. Several works have considered noisy channels rather than an error-free shared link in the delivery phase \cite{MaddahAliInterferenceJournal,NaderializadehMaddahAliInterference, TimoErasureChannel}, as well as delivery over a fading channel \cite{HuangFadingChannelcodedcaching}.  

In this paper, we study coded caching in both the centralized and decentralized settings. We propose a novel \textit{group-based caching scheme}, and show that it improves the caching gain in both scenarios, particularly when the number of users in the system is larger than the number of files, i.e., when $K > N$. Note that, this setting may be applicable for the distribution of extremely popular files that become viral over the Internet, and are requested by a large number of users over a relatively short period of time, or for the distribution of various software updates to users. 

Our main contributions can be summarized as below:

\begin{enumerate}[label=\arabic*)]
\item  In the centralized setting, we propose a novel \textit{group-based centralized} (GBC) coded caching scheme for a cache size of $M = N/K$ at the users. It is shown that the GBC scheme achieves a lower delivery rate compared to the state-of-the-art results when $K>N\geq3$. This improvement can be further extended to other cache sizes through memory-sharing with the schemes proposed in \cite{MaddahAliCentralized} and \cite{ZhiChenXOR}.

\item By employing the same group-based caching idea in the decentralized setting, we introduce the \textit{group-based decentralized caching} (GBD) scheme. The GBD scheme is shown to achieve a smaller delivery rate compared to the schemes presented in \cite{MaddahAliDecentralized} and \cite{KaiWanUncodedCaching} for $K>N$. 

\item We provide numerical results validating the superiority of the proposed group-based coded caching schemes compared to the state-of-the-art in both the centralized and decentralized settings. 
\end{enumerate}

The rest of the paper is organized as follows. We present
the system model and relevant previous results in Section \ref{SystemModel}. The group-based centralized coded caching scheme is introduced in Section \ref{CentralizedScheme} for the centralized setting. In Section \ref{DecentralizedScheme} it is extended to the decentralized scenario. In both Sections \ref{CentralizedScheme} and \ref{DecentralizedScheme}, the derivation of the corresponding delivery rates are complemented with the analytical and numerical comparison of the achieved delivery rates with the state-of-the-art results. The proofs of our main results can be found in Appendix. We conclude the paper in Section \ref{Conclusion}.

\textit{Notations:} The sets of integers, real numbers, and positive real numbers are denoted by $\mathcal Z$, $\mathcal R$, and $\mathcal R^+$, respectively. For two integers $i \le j$, $\left[ {i:j} \right]$ represents the set of integers $\left\{ {i, i + 1, \ldots , j} \right\}$. If $i>j$, then $\left[ {i:j} \right] \buildrel \Delta \over = \left\{ \emptyset  \right\}$. The binomial coefficient ``$n$ choose $k$'' is denoted by $\binom{n}{k}$. Notation $ \oplus $ refers to the bitwise XOR operation. We use $\left| \cdot \right|$ to indicate the length of a binary sequence, or the cardinality of a set. 

\section{System Model and Previous Results}\label{SystemModel}
Consider a server which has $N$ popular files, denoted by $W_1, W_2, ..., W_N$, each of length $F$ bits, in its database. The files are assumed to be independent of each other and distributed uniformly over the set $\left[ 1:2^F \right]$. There are $K$ users in the system, denoted by $U_1, U_2, ..., U_K$, and each user is equipped with a local cache of capacity $MF$ bits. 

The caching system operates in two distinct phases. In the initial \textit{placement phase}, cache of each user is filled by the server. The contents of the cache of user $U_k$ at the end of the placement phase is denoted by $Z_k$, for $k=1,...,K$. The user demands are revealed after the placement phase. Each user requests a single file from among the available files in the database, and the demand of user $U_k$ is denoted by $d_k$, where ${d_k} \in \left[ 1:N \right]$, $\forall k$. In the \textit{delivery phase} that follows, a common message $X$ is transmitted by the server to the users over an error-free shared link to satisfy all the user requests simultaneously. 

\textbf{Definition.} An $(M,R,F)$ \textit{caching and delivery code} consists of
\begin{enumerate}[label=\roman*.]
\item $K$ caching functions:
\begin{equation}\label{S1} {\phi _k}:{\left[ {1:{2^F}} \right]^N} \to \left[ {1:{2^{\left\lfloor {FM} \right\rfloor }}} \right], \quad k=1, ..., K,
\end{equation}
each of which maps the database $\left\{ W_1, W_2, ..., W_N \right\}$ to the cache content $Z_k$ of user $U_k$, i.e., $Z_k = \phi _k \left( W_1, W_2, ..., W_N \right)$;
\item one coded delivery function:
\begin{equation}\label{S2} \psi :{\left[ {1:{2^F}} \right]^N} \times {\left[ 1:N \right]^K} \to \left[ {1:{2^{\left\lfloor {FR} \right\rfloor }}} \right],
\end{equation}
which generates the common message $X$ to be delivered through the shared link as a function of the database $\left\{ W_1, W_2, ..., W_N \right\}$ and the user demands $\left\{ d_1, d_2, ..., d_K \right\}$, i.e., $X=\psi \left(W_1, W_2, ..., W_N, d_1, d_2, ..., d_K\right)$;
\item $K$ decoding functions:
\begin{equation}\label{S3} {\mu _k}:\left[ {1:{2^{\left\lfloor {FM} \right\rfloor }}} \right] \times \left[ {1:{2^{\left\lfloor {FR} \right\rfloor }}} \right] \times \left[ 1:N \right]^K \to \left[ {1:{2^F}} \right], \quad k=1, ..., K,
\end{equation}
each of which reconstructs file ${\hat W_{{d_k}}}$ at user $k$ as a function of the cache content $Z_k$, the common message $X$ delivered over the shared link, and the user demands $\left\{ d_1, d_2, ..., d_K \right\}$, i.e., ${\hat W_{{d_k}}} = {\mu _k}\left(Z_k, X, d_1, d_2, ..., d_K\right)$.
\end{enumerate} 

In the above definition, $M$ represents the normalized cache capacity of each user (normalized by $F$), and $R$ denotes the transmission rate during the delivery phase.

\textbf{Definition.} The probability of error of an $(M,R,F)$ caching and delivery code is defined as
\begin{equation}\label{S4} {P_e} \buildrel \Delta \over = \mathop {\max }\limits_{\left( {{d_1},{d_2},...,{d_K}} \right)} \Pr \left\{ {\mathop  \bigcup \limits_{k = 1}^K \left\{ {{{\hat W}_{d_k}} \ne {W_{{d_k}}}} \right\}} \right\}.
\end{equation}

\textbf{Definition.} The delivery rate-cache capacity pair $(M,R)$ is \textit{achievable} if for every $\varepsilon  > 0$ an $(M,R,F)$ caching and delivery code can be found with significantly large $F$, that has a probability of error less than $\varepsilon$, i.e., $P_e < \varepsilon$.

\begin{figure}[!t]
\centering
\includegraphics[scale=0.7]{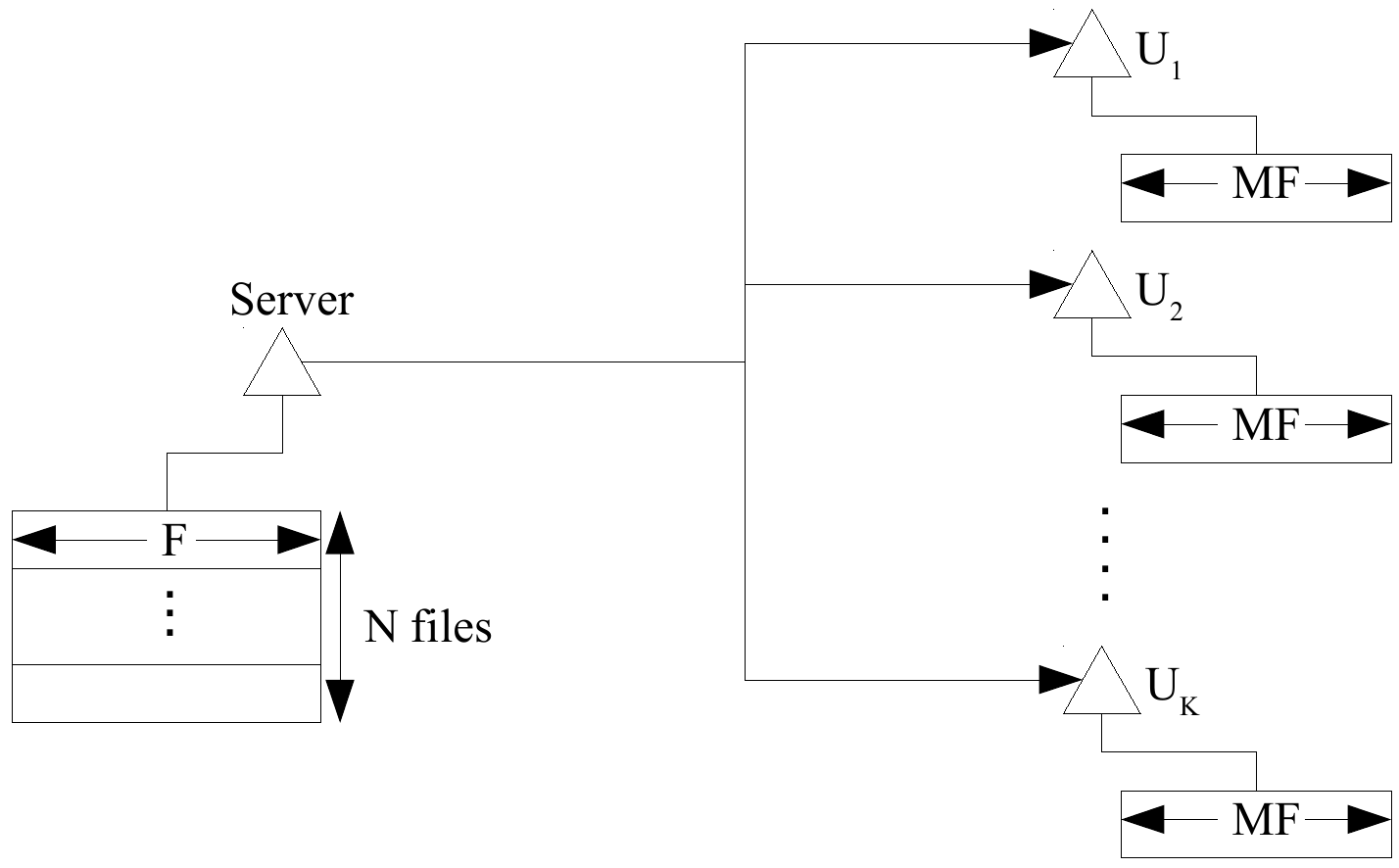}
\caption{Illustration of a caching system including a server holding $N$ popular files, each $F$ bits long, in its database, serving contents to $K$ users, each equipped with a local cache of capacity $MF$ bits, through an error-free shared link.} 
\label{System_Model}
\end{figure} 

Naturally, there is a trade-off between the cache capacity of the users, and the required minimum delivery rate. Our goal is to define and characterize this trade-off rigorously. The delivery rate-cache capacity trade-off for the caching network depicted in Fig. \ref{System_Model}, with the caches of equal capacity $M$, is denoted by $R^{\rm{C}}(M)$, and defined as
\begin{equation}\label{S5} {R^{\rm{C}}}\left( M \right) \buildrel \Delta \over = \inf \left\{ {R:\left( {M,R} \right) \mbox{ is achievable} } \right\}.
\end{equation}
For example, when there is no cache available at the users, i.e., $M=0$, the worst case of user demands corresponds to users requesting as distinct files as possible; and all the requested files should be delivered by the server over the shared link; we have $R^{\rm{C}}(0)=\min \left\{ {N,K} \right\}$. On the other hand, when the cache capacity is large enough to store all the files, i.e., when $M=N$, all the files can be made available locally to all the users, and no content needs to be sent during the delivery phase, and $R^{\rm{C}}(N)=0$. Our goal is to characterize the delivery rate-cache capacity trade-off for all cache capacities $0 < M < N$ for any given number of users $K$.   

For a centralized caching system, we denote the best achievable delivery rate-cache capacity trade-off in the literature by $R^{\rm{C}}_b(M)$. For $N > K$, the best known delivery rate is achieved by the scheme proposed in \cite{MaddahAliCentralized}, for any cache capacity satisfying $0 \le M \le N$, i.e., we have $R^{\rm{C}}_b(M) = R^{\rm{C}}_{\rm{MAN}}(M)$. On the other hand, for $N \le K$, we define the following set of $\left( N,K \right)$ corresponding to the scenario presented in \cite{MohammadDenizTCom} to find the best achievable scheme in the literature:
\begin{equation}\label{PaticularSetOurs} \zeta \buildrel \Delta \over = \left\{ {\left( {N,K} \right):4 \le N < K \le \frac{{3N}}{2},\mbox{$N$ and $K$ have a common divisor $c>1$}} \right\}.
\end{equation}
When $N \le K$, two following cases are considered to characterize $R_b^{\rm{C}}(M)$.  

\begin{enumerate}[label=\bfseries Case \arabic*:,align=left]
\item $\left( {N,K} \right) \in \zeta$. The best delivery rate in the literature for Case 1 can be characterized as follows. For cache capacity $M = 1/K$, the CFL scheme achieves the optimal delivery rate \cite{ZhiChenXOR}, while for $M = (N-1)/K$, the coded caching scheme presented in \cite{MohammadDenizTCom}, which will be referred to as AG, achieves the best known delivery rate. For cache capacities $M=tN/K$, the MAN scheme should be utilized, where if $K < 3N/2$, then $t \in \left[ 1:K \right]$, and if $K = 3N/2$, then $t \in \left[ 2:K \right]$. For any other cache capacities $0 \le M \le N$, the lower convex envelope of the mentioned points can be achieved through memory-sharing. 
\item $N \le K$ and $\left( {N,K} \right) \notin \zeta$. The best delivery rate in the literature is achieved by memory-sharing between the MAN and CFL schemes. In this case, the CFL scheme again achieves the optimal delivery rate for a cache capacity of $M = 1/K$. To find the best achievable scheme for $M = N/K$, we define, for $t \in \left[ 1:K \right]$,
\begin{align}\label{fNKt} f(N,K,t) \triangleq \frac{{\left( {N - 1} \right)\left( {K - t} \right)}}{{\left( {t + 1} \right)\left( {tN - 1} \right)}} + {N^2}\left( {1 - \frac{1}{K}} \right)\left( {\frac{{t - 1}}{{tN - 1}}} \right),
\end{align}
and
\begin{equation}\label{optimumt} {t^*} \buildrel \Delta \over = \mathop {\arg \min }\limits_{t \in \left[ {1:K} \right]} f(N,K,t). 
\end{equation}
Then, we have 
\begin{equation}\label{BestDeliveryRatePoint} R_b^{\rm{C}}\left(N/K \right) = f(N,K,t^*).
\end{equation} 
For cache capacities $M = lN/K$, where $l \in \left[ t^*:K \right]$, the MAN scheme should be employed. The best delivery rate for other cache capacities is the lowest convex envelope of the points $M = l/K$ and $M = lN/K$, for $l \in \left[ t^*:K \right]$, which can be achieved by memory-sharing.
\end{enumerate}

\section{Centralized Coded Caching}\label{CentralizedScheme}
In this section, the placement and delivery phases of the proposed centralized coded caching scheme will be introduced. The delivery rate achieved by this scheme will be analyzed and compared with the existing schemes in the literature. We first illustrate the proposed caching scheme on an example. 

\begin{figure}[!t]
\centering
\includegraphics[scale=0.7]{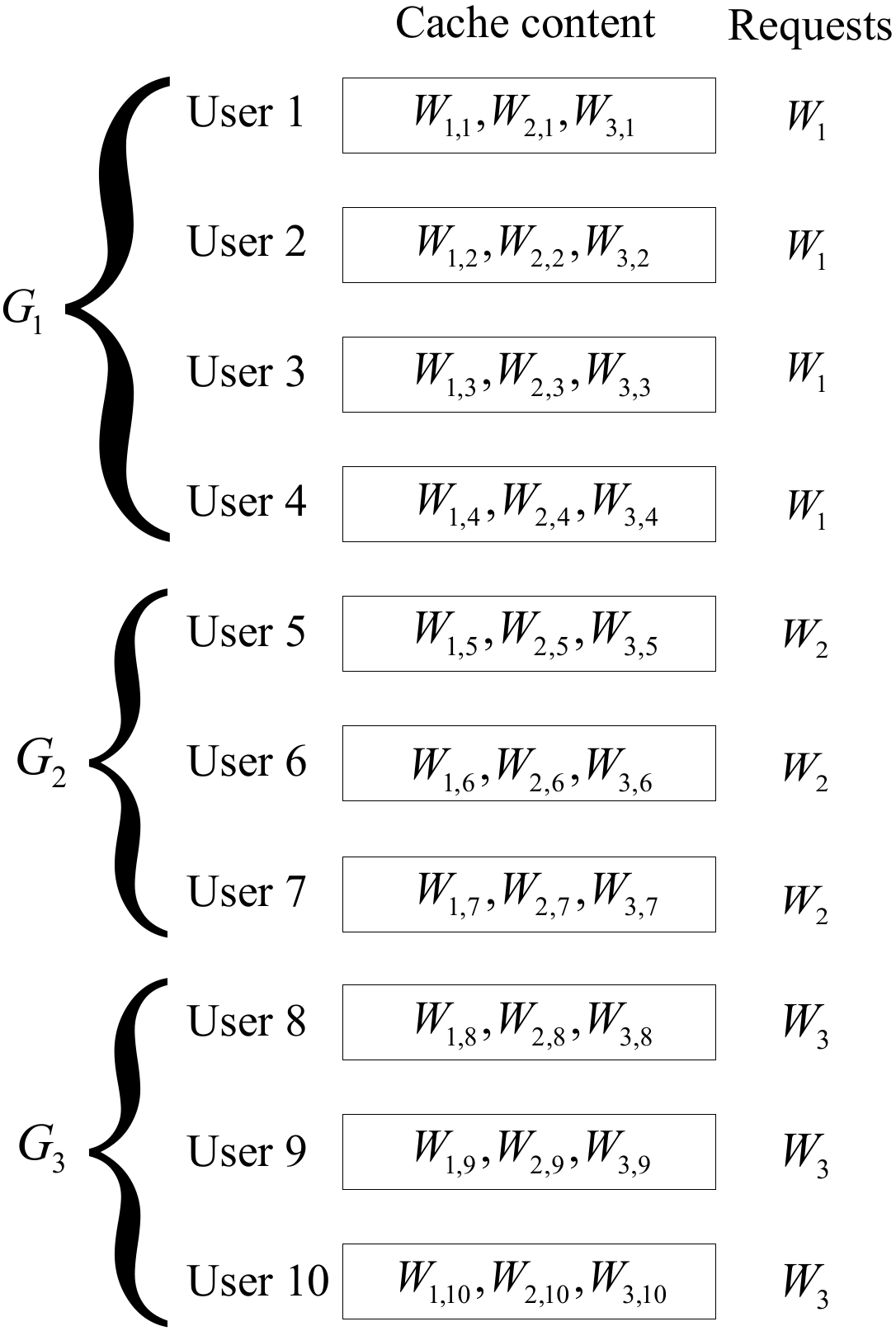}
\caption{Cache contents at the end of the placement phase for the proposed GBC scheme when each of the $K=10$ users demand a single file from among $N=3$ files in the database. The worst-case user demand combination is assumed, and the users are grouped into three groups according to their requests.} 
\label{CacheContent}
\end{figure} 

\theoremstyle{definition}
\newtheorem{exmp}{Example}

\begin{exmp}\label{ExampleCentralized}
In this example, we consider $K = 10$ users, $N = 3$ files, and a cache capacity of $M = 3/10$. Each file $W_i$ is first divided into $10$ non-overlapping subfiles $W_{i,j}$, for $j \in \left[ 1:10 \right]$, and $i \in \left[ 1:3 \right]$, each of length $F/10$ bits. Then one subfile from each file is placed into each user's cache; that is, we have $Z_j = \left( {{W_{1,j}},{W_{2,j}},{W_{3,j}}} \right)$, for $j=1, \ldots, 10$. See Fig. \ref{CacheContent} for the cache contents at the end of the placement phase. 

The worst-case of user demands is when the requested files are as distinct as possible. Without loss of generality, by re-ordering the users, we consider the following user demands:
\begin{equation}\label{DemandsExample} 
{d_j} = 
\begin{cases}
{{1},\quad 1 \le j \le 4},\\
{{2},\quad 5 \le j \le 7},\\
{{3},\quad 8 \le j \le 10}.
\end{cases}
\end{equation}
In the delivery phase, the users are grouped according to their demands. Users that request file $W_i$ from the server constitute group $G_i$, for $i \in \left[ 1:3 \right]$. See Fig. \ref{CacheContent} for the group formation corresponding to the demands in \eqref{DemandsExample}. The delivery phase is divided into two distinct parts, that are designed based on the group structure.

\begin{enumerate}[label=\bfseries Part \arabic*:,align=left]
  \item The first part of the delivery phase is designed to enable each user to retrieve all the subfiles of its demand that have been placed in the caches of users in the same group. As an example, all users $U_1$, ..., $U_4$, i.e., members of group $G_1$, should be able to decode all the subfiles $W_{1,1}$, $W_{1,2}$, $W_{1,3}$, $W_{1,4}$, i.e., the subfiles stored in the cache of users in $G_1$, after receiving the message transmitted in part 1. Accordingly, in our example, in part 1 of the delivery phase the server sends the following coded subfiles over the shared link.  ${W_{1,1}} \oplus {W_{1,2}}$, ${W_{1,2}} \oplus {W_{1,3}}$, ${W_{1,3}} \oplus {W_{1,4}}$, ${W_{2,5}} \oplus {W_{2,6}}$, ${W_{2,6}} \oplus {W_{2,7}}$, ${W_{3,8}} \oplus {W_{3,9}}$, ${W_{3,9}} \oplus {W_{3,10}}$.
  \item The purpose of part 2 is to make sure that, each user can retrieve all the subfiles of its desired file, which have been placed in the cache of users in other groups. Hence, the server transmits the following coded subfiles in the second part of the delivery phase. ${W_{1,5}} \oplus {W_{1,6}}$, ${W_{1,6}} \oplus {W_{1,7}}$, ${W_{2,1}} \oplus {W_{2,2}}$, ${W_{2,2}} \oplus {W_{2,3}}$, ${W_{2,3}} \oplus {W_{2,4}}$, ${W_{1,7}} \oplus {W_{2,4}}$, ${W_{1,8}} \oplus {W_{1,9}}$, ${W_{1,9}} \oplus {W_{1,10}}$, ${W_{3,1}} \oplus {W_{3,2}}$, ${W_{3,2}} \oplus {W_{3,3}}$, ${W_{3,3}} \oplus {W_{3,4}}$, ${W_{1,10}} \oplus {W_{3,4}}$, ${W_{2,8}} \oplus {W_{2,9}}$, ${W_{2,9}} \oplus {W_{2,10}}$, ${W_{3,5}} \oplus {W_{3,6}}$, ${W_{3,6}} \oplus {W_{3,7}}$, ${W_{3,7}} \oplus {W_{2,10}}$.    
\end{enumerate} 
It can be easily verified that, together with the contents placed locally, user $U_k$ can decode its requested file $W_{d_k}$, $\forall k \in \left[ 1:10 \right]$, from the message transmitted in the delivery phase. As a result, by delivering a total of $12F/5$ bits, which corresponds to a delivery rate of $2.4$, all user demands are satisfied. The delivery rate of the best achievable scheme in the literature for cache capacity of $M = 3/10$ can be evaluated from \eqref{BestDeliveryRatePoint}, and it is given by $R_b^{\rm{C}}\left( 3/10 \right)=2.43$.   
\qed
\end{exmp}   

\subsection{Group-Based Centralized Coded Caching (GBC)}\label{GBCscheme}
Here we generalize the ideas introduced above in the example, and introduce the group-based coded caching (GBC) scheme for any $N$ and $K$ values. We consider a cache capacity of $M = N/K$, i.e., the aggregate size of the cache memories distributed across the network is equivalent to the total size of the database.   

Note that, the server has no information about the user demands during the placement phase. Therefore, to satisfy all demand combinations efficiently, and to reduce the dependency of the delivery rate on the particular user demands as much as possible, we retain a symmetry among the subfiles of each file cached at each user. We employ the same placement phase proposed in \cite{MaddahAliCentralized} for $M = N/K$, in which each file $W_i$ is divided into $K$ non-overlapping subfiles $W_{i,j}$, for $i \in \left[ 1:N \right]$ and $j \in \left[ 1:K \right]$, each of the same size $F/K$ bits, and the cache contents of user $U_j$ is given by $Z_j = \left( {{W_{1,j}},{W_{2,j}}, \ldots ,{W_{N,j}}} \right)$, $\forall j \in \left[ 1:K \right]$. It is easy to see that the cache capacity constraint is satisfied. 

Without loss of generality, by re-ordering the users, it is assumed that the first $K_1$ users, referred to as group $G_1$, have the same request $W_1$, the next $K_2$ users, i.e., group $G_2$, request the same file $W_2$, and so on so forth. Hence, there are $K_i \ge 0$ users in each group $G_i$, each with the same request $W_i$, for $i \in \left[ 1:N \right]$. We will see that the delivery rate for the proposed GBC scheme is independent of the values $K_1, K_2, ..., K_N$. An example of grouping users is illustrated in Fig. \ref{CacheContent}, where $K_1=4$, and $K_2=K_3=3$. We define the following variable which will help simplify some of the expressions: 
\begin{align}\label{S_sumK}
S_i \buildrel \Delta \over = \sum\limits_{l = 1}^i {{K_l}}.
\end{align}
The coded delivery phase of the GBC scheme, for the user demands described above, is presented in Algorithm \ref{DeliveryCentralized}.

There are two parts in the delivery phase just as in Example \ref{ExampleCentralized}. Having received the contents delivered in the first part of Algorithm \ref{DeliveryCentralized}, each user can obtain the missing subfiles of its requested file, which are in the cache of users in the same group; that is, each user $U_k$ in group $G_i$ requesting file $W_i$ has access to subfile $W_{i,k}$, and can decode all subfiles $W_{i,l}$, $\forall l \in \left\{ S_{i-1}+1:S_i \right\}$, after receiving the contents delivered in line 3 of Algorithm \ref{DeliveryCentralized}, for $i=1, ..., N$ and $k = S_{i-1}+1, ..., S_i$. With the contents delivered in the second part, each user can decode the subfiles of its requested file which are in the cache of users in other groups. Note that, all users in group $G_i$ demanding file $W_i$ have decoded subfile $W_{j,S_i}$, and they can obtain all subfiles $W_{i,l}$, $\forall l \in \left\{ S_{j-1}+1:S_j \right\}$, i.e., subfiles of file $W_i$ having been cached by users in group $G_j$, after receiving $\mathop  \bigcup \limits_{k = S_{j - 1} + 1}^{S_j - 1} \left( {{W_{i,k}} \oplus {W_{i,k + 1}}} \right)$ and $W_{i, S_j} \oplus W_{j, S_i}$, for $i=1, ..., N-1$ and $j=i+1, ..., N$. Similarly, all users in group $G_j$ can decode all subfiles of their requested file $W_j$, which are in the cache of users in group $G_i$ by receiving $\mathop  \bigcup \limits_{k = S_{i - 1} + 1}^{S_i - 1} \left( {W_{j,k}\oplus W_{j,k+1}} \right)$ and $W_{i, S_j} \oplus W_{j, S_i}$, for $i=1, ..., N-1$ and $j=i+1, ..., N$. In this way, at the end of the proposed delivery phase, the users can recover all the bits of their requested files.  

\begin{algorithm}
\caption{Coded Delivery Phase of the GBC Scheme}
\label{DeliveryCentralized}
\begin{algorithmic}[1]
\Statex
\State{\textbf{Part 1}: Exchanging contents between users in the same group}{}
\For {$i = 1, \ldots, N$}
\State {server delivers $\left( \mathop  \bigcup \limits_{k = S_{i - 1} + 1}^{S_{i}-1} \left( {{W_{i,k}} \oplus {W_{i,k + 1}}} \right) \right)$.}
\EndFor
\Statex
\State{\textbf{Part 2}: Exchanging contents between users in different groups}{}
\For {$i = 1, \ldots, N-1$}
\For {$j = i+1, \ldots, N$}
\State {server delivers $\left( \mathop  \bigcup \limits_{k = S_{j - 1} + 1}^{S_j - 1} \left( {{W_{i,k}} \oplus {W_{i,k + 1}}} \right), \mathop  \bigcup \limits_{k = S_{i - 1} + 1}^{S_i - 1} \left( {W_{j,k}\oplus W_{j,k+1}} \right), W_{i, S_j} \oplus W_{j, S_i} \right)$.}
\EndFor
\EndFor
\end{algorithmic}
\end{algorithm}


\subsection{Delivery Rate Analysis}\label{AnalysisGBC}
Note that, the delivery rate of the GBC scheme, $R^{\mathrm{C}}_{\rm{GBC}}$, should be evaluated for the worst-case user demands. It can be argued that when $N < K$, the worst-case user demands is when there is at least one user requesting each file, i.e., $K_i > 0$, $\forall i \in \left[ 1:N \right]$. On the other hand, when $N \ge K$, without loss of generality, by re-ordering the users, the worst-case user demands is assumed to happen when $K_i = 1$, if $i \in \left[ 1:K \right]$; and $K_i = 0$, otherwise. 

When $N \ge K$, considering the worst-case user demands, the server transmits the contents $\left( \bigcup \limits_{i = 1}^{K - 1} \bigcup \limits_{j = i + 1}^K \left( W_{i,{S_j}} \oplus W_{j,{S_i}} \right) \right)$ using Algorithm \ref{DeliveryCentralized}, which are similar to the contents delivered using the delivery phase proposed in \cite[Algorithm 1]{MaddahAliCentralized} for a cache capacity of $M = N/K$. Thus, when $N \ge K$, the GBC scheme achieves the same delivery rate as the MAN scheme for $M = N/K$, i.e., $R_{\rm{GBC}}^{\rm{C}}(N/K) = R^{\rm{C}}_{\rm{MAN}}(N/K)$.

For $K>N$, the delivery rate of the GBC scheme is stated in the next theorem, whose proof can be found in Appendix \ref{ProofFirstTheorem}. We then compare the result with the best achievable scheme in the literature. 
\begin{theorem}\label{TheoremGBC}
In a centralized coded caching system with $N$ files, each of size $F$ bits, $K$ users, each equipped with a cache of capacity $MF$ bits, where $M = N/K$, the following delivery rate is achievable by the proposed GBC scheme, if $N < K$:
\begin{equation}\label{OurDeliveryRatePointTheorem} {R_{\mathrm{GBC}}^{\rm{C}}}\left( {\frac{N}{K}} \right) = N - \frac{{N\left( {N + 1} \right)}}{{2K}}.
\end{equation}
\end{theorem}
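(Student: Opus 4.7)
The plan is to establish Theorem \ref{TheoremGBC} by (i) reducing to a worst-case demand pattern in which every file is requested, (ii) verifying that the two-part delivery phase allows every user to reconstruct its requested file, and (iii) counting the total number of coded subfiles sent by the server.

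First, I would argue the worst-case reduction. Since $N<K$, if some file is not requested, then some group $G_i$ contains more than one user, and the subfiles placed in those users' caches can be exploited to produce more local overlap among the coded transmissions. Intuitively, merging two groups into one can only reduce the number of cross-group XORs in Part 2 without increasing Part 1, so I expect to show formally that the rate is maximized when $K_i\ge 1$ for all $i\in[1:N]$, i.e., $S_N=K$ and there are exactly $N$ groups, each nonempty. This justifies computing the rate under the assumption that Algorithm \ref{DeliveryCentralized} runs over all $N$ groups.

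Next, I would verify correctness. The two paragraphs preceding the theorem already sketch why each user in group $G_i$ can recover its desired file: the Part 1 stream $\bigcup_{k=S_{i-1}+1}^{S_i-1}(W_{i,k}\oplus W_{i,k+1})$ forms a ``chain'' that, combined with user $U_k$'s cached subfile $W_{i,k}$, yields every $W_{i,l}$ with $l\in[S_{i-1}+1:S_i]$. The Part 2 block for the pair $(i,j)$ sends two chains plus the bridging XOR $W_{i,S_j}\oplus W_{j,S_i}$, which allows each user in $G_i$ to recover all $W_{i,l}$ with $l\in[S_{j-1}+1:S_j]$, and symmetrically for $G_j$. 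I would make this decoding argument precise by an explicit induction along each chain, starting from the unique cached subfile.

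The heart of the proof is the rate count. Each transmitted object is a single subfile of size $F/K$ bits. Part 1 contributes $\sum_{i=1}^{N}(K_i-1)=K-N$ subfiles. For each unordered pair $i<j$, Part 2 contributes $(K_j-1)+(K_i-1)+1=K_i+K_j-1$ subfiles; summing over the $\binom{N}{2}$ pairs and noting that each $K_i$ appears exactly $N-1$ times yields
\begin{equation}
\sum_{1\le i<j\le N}(K_i+K_j-1)=(N-1)K-\binom{N}{2}.
\end{equation}
Adding the two parts and using $S_N=K$ gives a total of
\begin{equation}
(K-N)+(N-1)K-\frac{N(N-1)}{2}=\frac{N(2K-N-1)}{2}
\end{equation}
subfiles, independently of the particular $K_i$'s. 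Multiplying by $F/K$ and normalizing by $F$ yields the delivery rate
\begin{equation}
R_{\mathrm{GBC}}^{\mathrm C}\!\left(\tfrac{N}{K}\right)=\frac{N(2K-N-1)}{2K}=N-\frac{N(N+1)}{2K},
\end{equation}
as claimed.

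The main obstacle I anticipate is not the counting, which is routine once the indexing is set up carefully, but the worst-case demand reduction: making rigorous that no demand vector with fewer than $N$ distinct requested files forces a strictly higher rate than the balanced pattern $K_1,\dots,K_N\ge 1$. I would handle this by a direct ``splitting'' argument, showing that replacing a group $G_i$ of size $K_i\ge 2$ by two smaller groups requesting different files strictly increases the count in the equation above, which reduces the problem to the case already analyzed.
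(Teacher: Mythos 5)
Your proposal is correct and follows essentially the same route as the paper's proof in Appendix A: verify decodability along each XOR chain, then count $K-N$ subfiles for Part 1 and $\sum_{i<j}(K_i+K_j-1)=(N-1)K-\binom{N}{2}$ subfiles for Part 2, each of size $F/K$, which sums to the claimed rate. The only difference is that you are more explicit about the worst-case demand reduction (your splitting argument, which does work since $n\mapsto n(2K-n-1)/2$ is increasing for $n\le N<K$), a point the paper handles with an informal assertion in Section III-B rather than inside the proof.
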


\begin{remark}
It can be seen that the delivery rate of the GBC scheme depends only on $N$ and $K$, and is independent of the values of $K_1,..., K_{N}$, which implies the popularity of the files. The more distinct the files requested by the users, the higher the required delivery rate. Accordingly, the delivery rate given in \eqref{OurDeliveryRatePointTheorem} is obtained for the worst-case user demand combination, such that each file is requested by at least one user.  
\end{remark}

\subsection{Comparison with the State-of-the-Art}
In this subsection, we compare the performance of the GBC scheme with other caching schemes in the literature both analytically and numerically. We first show that the delivery rate achieved for a cache capacity of $M = N/K$ in \eqref{OurDeliveryRatePointTheorem} is the best known result as long as $N < K$. We then extend this improvement to a wider range of cache capacities through memory-sharing.

\begin{corollary}  
The achievable delivery rate of the GBC scheme for a cache capacity of $M = N/K$ given in \eqref{OurDeliveryRatePointTheorem} improves upon the best known delivery rate in the literature, $R^{\rm{C}}_b(N/K)$, when $K > N \ge 3$. 
\end{corollary}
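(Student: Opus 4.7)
The plan is to compare $R^{\rm C}_{\rm GBC}(N/K) = N - N(N+1)/(2K)$ with the best-known rate $R^{\rm C}_b(N/K)$ separately in the two regimes identified in Section \ref{SystemModel}, namely Case~2 in which $(N,K)\notin\zeta$, and Case~1 in which $(N,K)\in\zeta$.

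In Case~2 the paper already provides the explicit expression $R_b^{\rm C}(N/K) = f(N,K,t^*) = \min_{t\in[1:K]} f(N,K,t)$, so it suffices to establish the single inequality
\begin{equation*}
N - \frac{N(N+1)}{2K} \le f(N,K,t), \qquad \forall\, t \in [1:K].
\end{equation*}
I would form the difference, clear the common denominator $2K(t+1)(tN-1)$, and factor the resulting polynomial in $t$ so as to exhibit non-negativity. As a sanity check, the extreme $t=1$ (pure MAN at $M=N/K$) yields, after a short simplification,
\begin{equation*}
f(N,K,1) - \left(N - \frac{N(N+1)}{2K}\right) = \frac{(K-N)(K-N-1)}{2K} \ge 0,
\end{equation*}
with equality only when $K=N+1$, while the extreme $t=K$ (pure CFL) can be handled by an analogous direct computation. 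Intermediate values of $t$ would be dealt with either by an explicit factorization of the numerator, or by showing that the rational function $f(N,K,t)-R^{\rm C}_{\rm GBC}(N/K)$ has no interior roots in $[1:K]$ once $K>N\ge 3$.

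In Case~1, the envelope $R^{\rm C}_b(N/K)$ is the lower convex hull of the CFL, MAN, and AG points, so the only comparison not already covered by the Case~2 analysis is with memory-sharing lines involving the AG point at $M=(N-1)/K$. I would substitute the explicit AG rate from \cite{MohammadDenizTCom} into the memory-sharing formula evaluated at $M=N/K$ and repeat the clear-denominators-and-factor strategy; the restriction $N<K\le 3N/2$ built into $\zeta$ keeps the range of cases short.

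The main obstacle I anticipate is the algebraic manipulation required to express $f(N,K,t)-R^{\rm C}_{\rm GBC}(N/K)$ as a manifestly non-negative rational function uniformly in $t$: since $f$ interpolates between two structurally different schemes, the numerator polynomial in $t$ admits no obvious factorization, and some care is needed either to split the analysis according to which branch of $\min\{1/(1+KM/N),N/K\}$ is active in the underlying MAN rate, or to find a common factorization that covers both branches.
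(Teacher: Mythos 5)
Your plan is essentially the paper's own proof: it splits into the same two cases, reduces Case~2 to showing $R^{\rm C}_{\rm GBC}(N/K)\le f(N,K,t)$ for every $t\in[1:K]$ (which the paper carries out in Appendix~B by clearing the denominator $(t+1)(tN-1)$, obtaining a quadratic $h(t)=N^2t^2+(N+1)(N-2K)t+2K^2+N-2KN$ whose non-negativity is checked by completing the square for $K\le N(N+1)/2$ and by a discriminant argument otherwise), and handles Case~1 by comparing against the CFL--AG memory-sharing line (the paper phrases this as a slope comparison, $-N/2$ versus $\tfrac{(K-2)(K-2N)}{2(N-2)}$, rather than evaluating the AG line at $M=N/K$, but these are equivalent). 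Your $t=1$ check $(K-N)(K-N-1)/(2K)\ge 0$ is correct, and the ``main obstacle'' you anticipate is milder than feared since the cleared numerator is only quadratic in $t$; also note that $f(N,K,t)$ is already given in closed form, so no branching on the $\min$ in the MAN rate is needed.
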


\begin{proof}
We compare the delivery rate of the GBC scheme with the best achievable scheme in the literature described in Section \ref{SystemModel} for both cases $\left( {N,K} \right) \in \zeta$ and $\left( {N,K} \right) \notin \zeta$. For the first case, the slope of the delivery rate for $1/K \le M \le (N-1)/K$, which is achieved by memory-sharing between the CFL and AG schemes for points $M = 1/K$ and $M = (N-1)/K$, respectively, is $\frac{(K - 2)(K - 2N)}{2(N - 2)}$. On the other hand, the slope of the delivery rate for $1/K \le M \le N/K$, achieved through memory-sharing between the CFL and GBC schemes for points $M = 1/K$ and $M = N/K$, respectively, is $- N/2$. Since for $\left( {N,K} \right) \in \zeta$, we have $N \ge 4$, and $N+2 \le K \le 3N/2$, it can be easily verified that the latter lies below the former; that is, by exploiting the GBC scheme rather than the AG scheme, a lower delivery rate can be achieved at $M = N/K$. 

For a comparison of $R^{\rm{C}}_{\mathrm{GBC}}\left( N/K \right)$ and $R_b^{\rm{C}}(N/K)$ in the second case, we need to evaluate $R_b^{\rm{C}}\left( N/K \right)$ given in \eqref{BestDeliveryRatePoint}, which is not straightforward. Instead, in Appendix \ref{CentralizedImprovement}, we prove that for $K > N \ge 3$, ${R^{\rm{C}}_{\mathrm{GBC}}}\left( N/K \right) \le f\left( N,K,t \right)$, $\forall t \in \left\{ 1:K \right\}$. Since $R_b^{\rm{C}}\left( N/K \right) = f\left( N,K,t^* \right)$, where $t^* \in \left\{ 1:K \right\}$, it can be concluded that ${R^{\rm{C}}_{\mathrm{GBC}}}\left( N/K \right) \le {R^{\rm{C}}_b}\left( N/K \right)$. This completes the proof. 
\end{proof}

\begin{remark}
We observe through simulations that, apart from the case $N=2$, the centralized scheme proposed in \cite{KaiWanUncodedCaching}, referred to as WTP, does not improve upon time-sharing between the MAN and CFL schemes. Nevertheless, below we explicitly demonstrate the superiority of the GBC scheme over WTP for a cache capacity of $M = N/K$, when $K > N \ge 2$. Let $R^{\rm{C}}_{\mathrm{WTP}}\left( M \right)$ denote the delivery rate-cache capacity trade-off of the WTP scheme in the centralized setting. According to \cite[Theorem 1]{KaiWanUncodedCaching}, we have
\begin{equation}\label{WTPDeliveryRate} R_{\mathrm{WTP}}^{\rm{C}}\left(\frac{N}{K} \right) \ge \min \left( {R_{{\mathrm{co}}_1}\left( {\frac{N}{K}} \right),R_{{\mathrm{co}}_2}\left( {\frac{N}{K}} \right)} \right),
\end{equation}  
where
\begin{subequations}
\label{WTPDeliveryRateco}
\begin{align}
{R_{{\rm{c}}{{\rm{o}}_1}}}\left( M \right) &= N - M - \frac{{M\left( {N - 1} \right)K\left( {N - M} \right)}}{{{N^2}\left( {K - 1} \right)}},\\
{R_{{\rm{c}}{{\rm{o}}_2}}}\left( M \right) &= \frac{{K\left( {N - M} \right)}}{{N + KM}}.
\end{align}
\end{subequations}
When $K > N \ge 2$, it can be easily verified that 
\begin{subequations}
\label{CompareOursWTPDeliveryRateco}
\begin{align}
{R^{\rm{C}}_{{\rm{GBC}}}}\left( {\frac{N}{K}} \right) &\le {R_{{\rm{c}}{{\rm{o}}_1}}}\left( {\frac{N}{K}} \right) = N + \frac{1}{K} - \frac{{2N}}{K},\\
{R^{\rm{C}}_{{\rm{GBC}}}}\left( {\frac{N}{K}} \right) &\le {R_{{\rm{c}}{{\rm{o}}_2}}}\left( {\frac{N}{K}} \right) = \frac{{K - 1}}{2},
\end{align}
\end{subequations}
which concludes that 
\begin{equation}\label{CompareOursWTPDeliveryRate} {R^{\rm{C}}_{{\rm{GBC}}}}\left( {\frac{N}{K}} \right) \le {R^{\rm{C}}_{{\rm{WTP}}}}\left( {\frac{N}{K}} \right).
\end{equation}  
\end{remark} 

The improvement obtained by using the GBC scheme for $M = N/K$, when $N < K$, can be extended to any cache capacities $1/K < M < \hat tN/K$, where $\hat t \buildrel \Delta \over = \max \left\{ {2,{t^*}} \right\}$, through memory-sharing between the CFL scheme for $M = 1/K$, the GBC scheme for $M = N/K$, and the MAN scheme for cache capacity $M = \hat tN/K$. Note that, when $t^* = 1$, the improvement can be extended to the interval $1/K < M < 2N/K$, while for $t^* \ge 2$, the superiority of the GBC scheme holds for cache capacities $1/K < M < t^*N/K$.

\begin{corollary}  
For $N$ files and $K$ users, each equipped with a cache of normalized capacity $M$, the following delivery rate-cache capacity trade-off is achievable in the centralized setting if $K>N$:
\begin{align}\label{OurDeliveryRateIntervalCorollary} 
R^{\rm{C}}_{\mathrm{GBC}}\left( M \right) = 
\begin{cases}
N \left( 1 - \frac{M}{2} - \frac{1}{2K} \right), \qquad \qquad \qquad \quad \;  \mbox{ if }1/K \le M \le N/K,\\
\frac{K-\hat{t}}{\hat{t}^2-1}\left(\frac{KM}{N} -1\right) + \frac{K-N}{\hat{t}-1} \left( \hat{t}N/K-M \right), \mbox{ if } N/K \le M \le \hat{t}N/K,
\end{cases}
\end{align}
where $\hat t \buildrel \Delta \over = \max \left\{ {2,{t^*}} \right\}$, and $t^*$ is determined as in \eqref{optimumt}.
\end{corollary}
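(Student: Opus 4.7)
My plan is to prove the corollary by a direct memory-sharing argument across three achievable rate--memory operating points. Memory-sharing asserts that if $(M_1,R_1)$ and $(M_2,R_2)$ are both achievable, then every convex combination is also achievable, realized by splitting each file and each user cache into two fractions of appropriate relative size and applying a distinct caching-and-delivery scheme to each fraction. The three anchor points I would combine are the CFL point $(1/K,\, N(1-1/K))$ from \eqref{rate_coded_CFL}, the GBC point $(N/K,\, N - N(N+1)/(2K))$ from Theorem \ref{TheoremGBC}, and the MAN point $(\hat t N/K,\, (K-\hat t)/(\hat t+1))$ obtained by evaluating \eqref{rate_coded_MAN} at $M=\hat t N/K$. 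The two branches of \eqref{OurDeliveryRateIntervalCorollary} correspond to linear interpolation between the first two anchors on $[1/K, N/K]$ and between the second and third anchors on $[N/K, \hat t N/K]$.

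For the first branch, I would compute the slope of the chord joining the CFL and GBC anchors: the rate decrement is $N(1-1/K) - (N - N(N+1)/(2K)) = N(N-1)/(2K)$ over a memory increment of $(N-1)/K$, giving slope $-N/2$. Anchoring the resulting affine function at the CFL point yields $R(M) = N(1-1/K) - (N/2)(M-1/K) = N(1 - M/2 - 1/(2K))$, which reproduces the first case of \eqref{OurDeliveryRateIntervalCorollary} directly.

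For the second branch I would write the chord in barycentric form as $R(M) = \lambda_{\rm G}\, R_{\rm GBC} + \lambda_{\rm M}\, R_{\rm MAN}$, with weights $\lambda_{\rm G} = (\hat t N/K - M)/((\hat t - 1)N/K)$ and $\lambda_{\rm M} = (M - N/K)/((\hat t - 1)N/K)$, each designed to vanish at the opposite endpoint. After substituting the two rate values, I would reorganize the expression into the form $\alpha(KM/N - 1) + \beta(\hat t N/K - M)$ so as to align it with the two summands in the corollary. The main grind is the algebraic identification of $\alpha$ and $\beta$ with the claimed coefficients $(K-\hat t)/(\hat t^2 - 1)$ and $(K-N)/(\hat t - 1)$; I would handle this by simplifying each coefficient separately, and as an independent sanity check verify agreement at the two endpoints $M=N/K$ and $M=\hat t N/K$, since two affine functions coinciding at two distinct points coincide everywhere.

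The only remaining subtlety is to argue that the broken line CFL $\to$ GBC $\to$ MAN really is the lower convex envelope of the three anchor points, i.e.\ that GBC is not bypassed by the longer chord running directly from CFL to MAN at $M=\hat t N/K$. This is exactly what the immediately preceding Corollary established: for $K > N \ge 3$ one has $R^{\rm C}_{\rm GBC}(N/K) < R^{\rm C}_b(N/K)$, and in Case 2 the bound $R^{\rm C}_b(N/K) = f(N,K,t^*)$ is itself the CFL--MAN chord evaluated at $M=N/K$, so the GBC anchor lies strictly below this chord. Equivalently, the slope of the CFL--GBC chord ($-N/2$) is strictly more negative than the slope of the GBC--MAN chord, so GBC is extremal and the piecewise-linear rate in \eqref{OurDeliveryRateIntervalCorollary} is achievable throughout $[1/K, \hat t N/K]$ as claimed.
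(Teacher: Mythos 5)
Your approach is exactly the paper's: the corollary is stated without a separate proof and is justified only by the memory-sharing sentence immediately preceding it, i.e.\ interpolation between the CFL point at $M=1/K$, the GBC point at $M=N/K$, and the MAN point at $M=\hat t N/K$. Your derivation of the first branch (slope $-N/2$, hence $N(1-M/2-1/(2K))$) is correct and complete.

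However, the step you defer as ``the main grind'' --- identifying the coefficients of the GBC--MAN chord with those printed in the second branch --- does not close. With your weights, the MAN contribution does come out as $\frac{K-\hat t}{\hat t^{2}-1}\bigl(\frac{KM}{N}-1\bigr)$, but the GBC contribution is
\[
\lambda_{\rm G}\Bigl(N-\tfrac{N(N+1)}{2K}\Bigr)\;=\;\frac{2K-N-1}{2(\hat t-1)}\Bigl(\frac{\hat t N}{K}-M\Bigr),
\]
whose coefficient equals the printed $\frac{K-N}{\hat t-1}$ only when $N=1$. Equivalently, the endpoint sanity check you yourself propose fails at the left endpoint: the printed second branch evaluates at $M=N/K$ to $\frac{N(K-N)}{K}=N-\frac{N^{2}}{K}$, which for $N\ge 2$ is strictly below $R^{\rm C}_{\rm GBC}(N/K)=N-\frac{N(N+1)}{2K}$ (e.g.\ $2.1$ versus $2.4$ for $N=3$, $K=10$), so the two branches are not even continuous and the second branch is not the GBC--MAN chord. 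Memory-sharing therefore establishes a correct but \emph{different} affine expression on $[N/K,\hat tN/K]$; the expression as printed is not implied by the three anchor points (it would strictly improve on GBC at $M=N/K$, contradicting the paper's own claims about that point). You should either carry the computation through and report the corrected coefficient $\frac{2K-N-1}{2(\hat t-1)}$, or flag the printed statement as containing a typo. A minor further remark: your closing paragraph on the lower convex envelope is not needed for achievability --- each chord is achievable by memory-sharing whether or not the GBC point is extremal.
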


\begin{figure}[!t]
\centering
\includegraphics[scale=0.7]{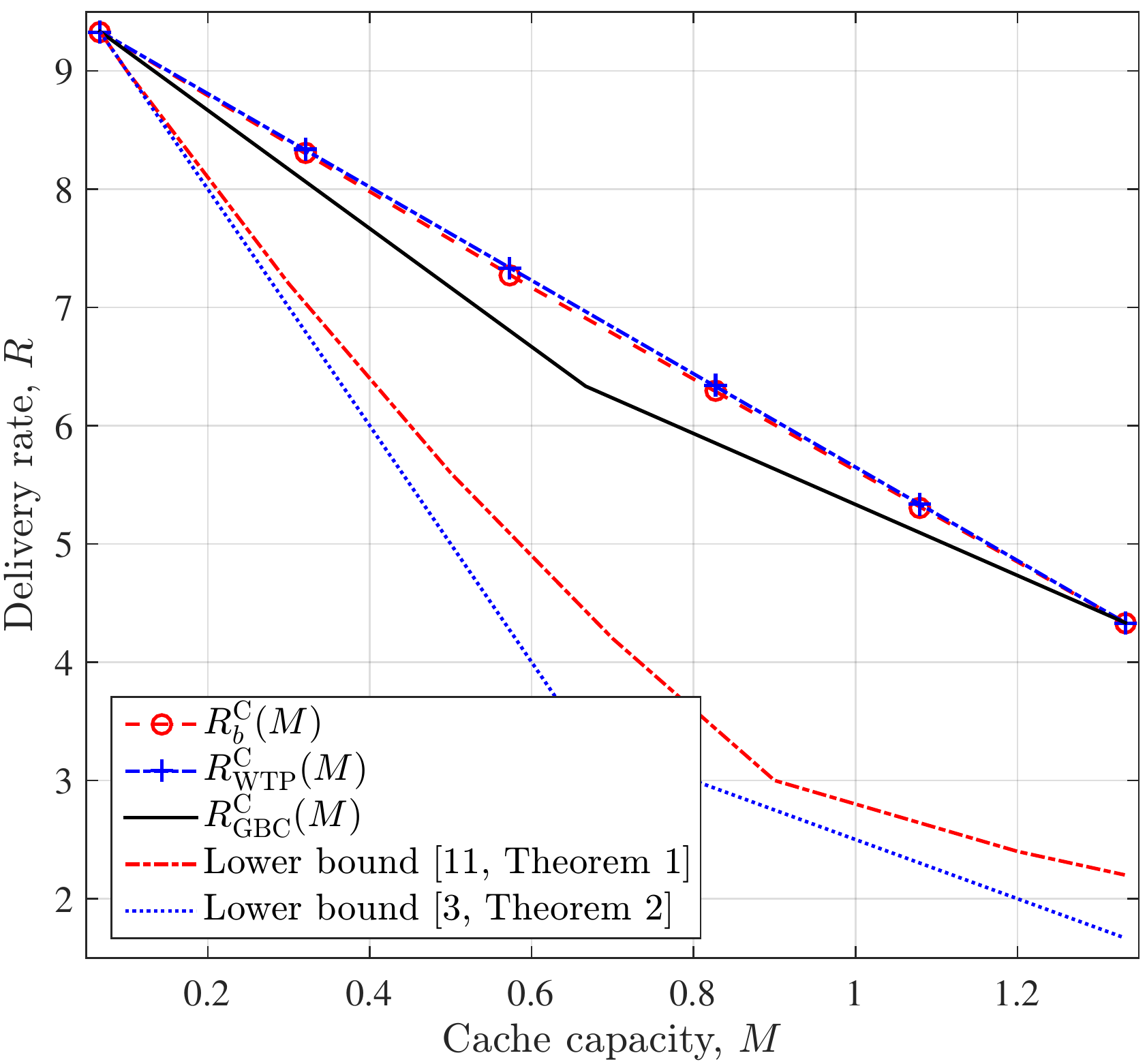}
\caption{Delivery rate cache capacity trade-off of the proposed GBC scheme compared with the existing schemes in the literature for $N=10$ and $K=15$. This setting corresponds to Case 1 in Section \ref{SystemModel}. Since $K = 3N/2$, for $1/K \le M \le 2N/K$, $R_{\rm{b}}^{\rm{C}}$ is achieved by memory-sharing between CFL, AG, and MAN schemes.} 
\label{N10K15}
\end{figure}

Next, we illustrate the improvements in the delivery rate offered by the proposed GBC scheme through numerical simulations. The delivery rate of known centralized caching schemes as a function of the cache capacity, $M$, are compared in Fig. \ref{N10K15}, for $N=10$ files and $K=15$ users. Note that, these $N$ and $K$ values correspond to Case 1 for the best achievable scheme in the literature, i.e., $\left( {N,K} \right) \in \zeta$. The best known achievable delivery rate in the literature for Case 1 is achieved by memory-sharing between the CFL, AG, and MAN schemes. Since we have $K = 3N/2$, the schemes are compared over the range $1/K \le M \le 2N/K$, where GBC provides an improvement. We also include the centralized WTP scheme, whose performance is slightly worse than that achieved by memory-sharing between the CFL, AG, and MAN schemes in this scenario. The information theory and cut-set lower bounds investigated in \cite[Theorem 1]{SenguptaCaching} and \cite[Theorem 2]{MaddahAliCentralized}, respectively, are also included in this figure for comparison. We observe a significant improvement in the delivery rate offered by the GBC scheme compared to all other schemes in the literature for all cache capacity values $1/K < M < 2N/K$. Despite the improvement upon other known schemes, there is still a gap between $R_{\mathrm{GBC}}$ and the theoretical lower bound; although this gap might as well be due to the looseness of the known lower bounds.

In Fig. \ref{N50K130}, we compare the delivery rate of the GBC scheme with the existing schemes for $N=50$ and $K=130$. This scenario corresponds to Case 2 of the best achievable schemes in the literature, i.e., $\left( {N,K} \right) \notin \zeta$, whose delivery rate is achieved through memory-sharing between the CFL and MAN schemes. According to \eqref{optimumt}, we have $t^*=4$, and hence, $\hat t = t^* = 4$. The centralized WTP scheme achieves the same performance as memory-sharing between the CFL and MAN schemes in this scenario. We again observe that the GBC scheme achieves a significantly lower delivery rate compared to the known caching schemes over the range of cache capacities $1/K \le M \le 4N/K$. 

\begin{figure}[!t]
\centering
\includegraphics[scale=0.7]{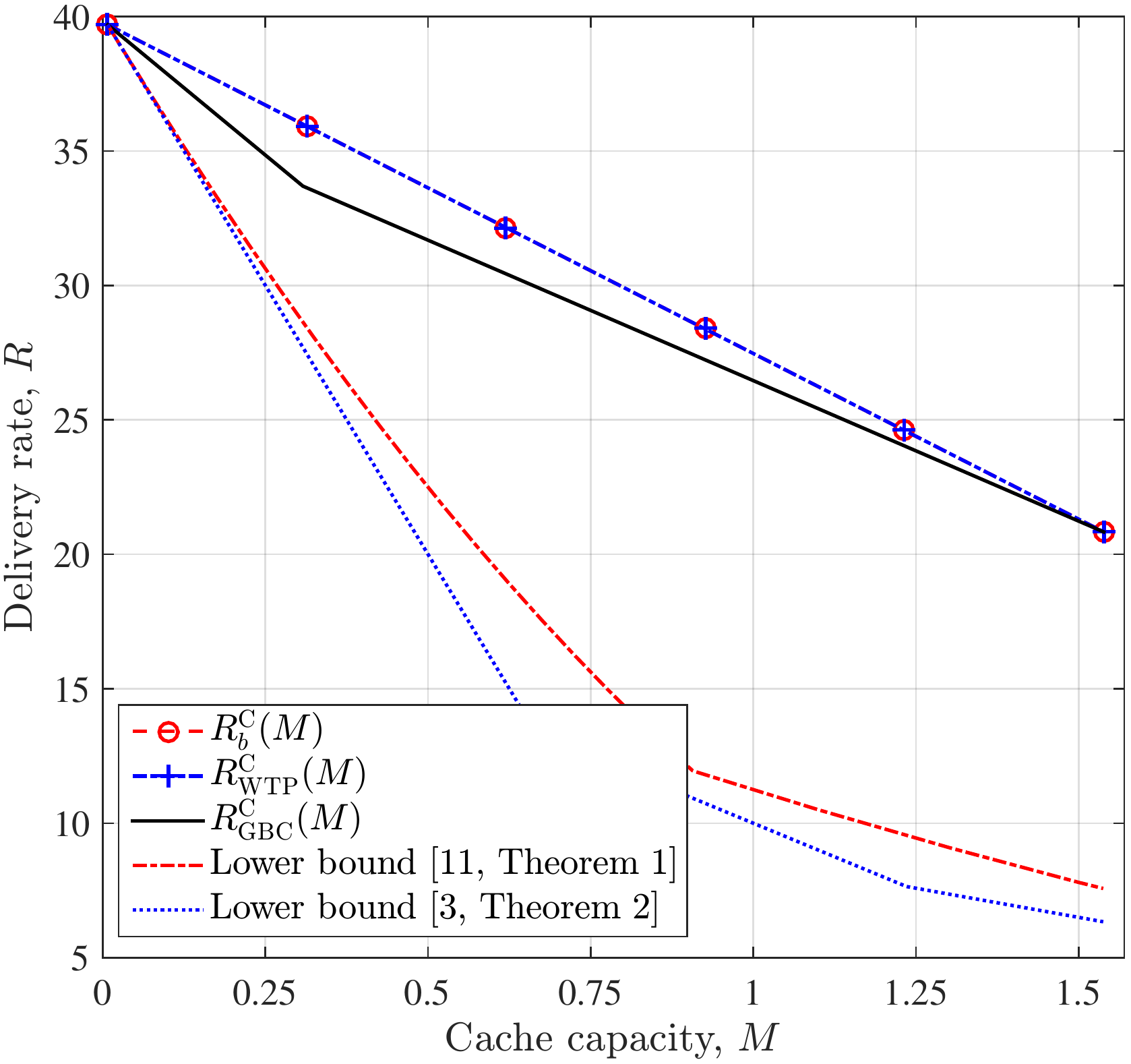}
\caption{Delivery rate cache capacity trade-off of the proposed GBC scheme compared with the best achievable scheme in the literature for $N = 50$ and $K = 130$.  This setting corresponds to Case 2 in Section \ref{SystemModel}. $R_{\rm{b}}^{\rm{C}}$ is achieved by memory-sharing between the CFL and MAN schemes for $1/K \le M \le 4N/K$.} 
\label{N50K130}
\end{figure} 

In Fig. \ref{NConstantKVariable}, the delivery rate of the proposed GBC scheme for cache capacity $M = N/K$, i.e., $R^{\rm{C}}_{\rm{GBC}}( N/K )$ given in \eqref{OurDeliveryRatePointTheorem}, is compared with the best achievable delivery rate in the literature for the same cache size, when the number of files is $N=100$, and the number of users varies from $K = 200$ to $K = 1000$. In this scenario, since $K \ge 2N$, the best achievable rate in the literature, $R_b^{\rm{C}}( N/K )$, is determined by \eqref{BestDeliveryRatePoint}. The WTP scheme is not included as it again achieves the same delivery rate as $R_b^{\rm{C}}( N/K )$. We observe that the GBC provides a significant gain in the delivery rate for the whole range of $K$ values; however, the improvement is more pronounced for smaller number of users, $K \sim 200$. For $K=200$ users in the system, the GBC scheme provides a $9.75\%$ reduction in the delivery rate. We also observe that the gap between the delivery rate of the GBC scheme and the lower bound decreases with increasing number of users in the system.      

\begin{figure}[!t]
\centering
\includegraphics[scale=0.7]{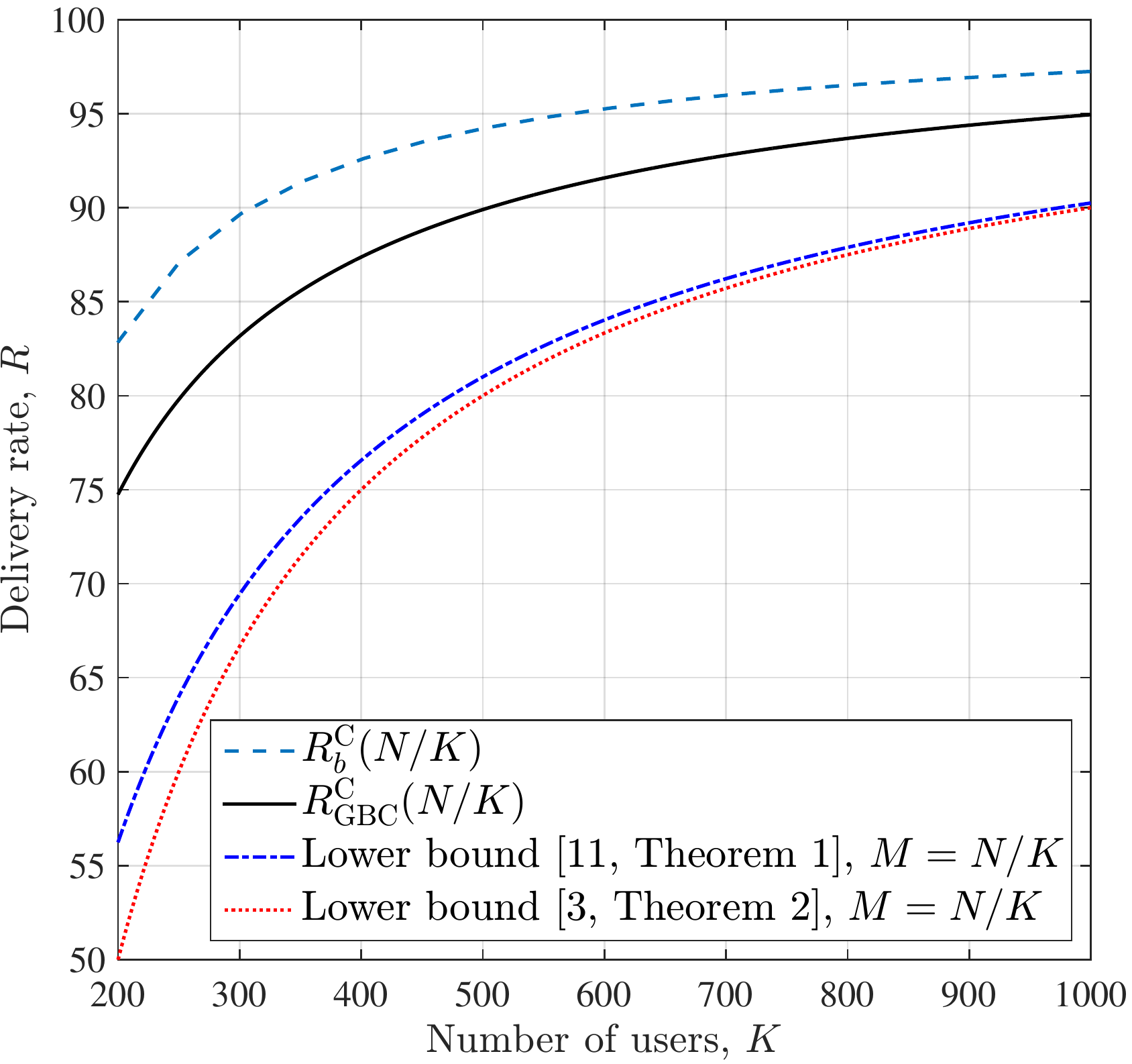}
\caption{Delivery rate of the GBC scheme for cache capacity $M = N/K$ as a function of $K$, for $K \in \left[ 200:1000 \right]$, and $N=100$. The best achievable delivery rate in the literature is determined by \eqref{BestDeliveryRatePoint} for all values of $K$.} 
\label{NConstantKVariable}
\end{figure}

\section{Decentralized Coded Caching}\label{DecentralizedScheme}
In this section, we consider a decentralized caching system, in which neither the number nor the identity of the users that participate in the delivery phase are known during the placement phase \cite{MaddahAliDecentralized}. Accordingly, in the decentralized setting, we cannot coordinate the cache contents among users in order to maximize the multicasting opportunities during the delivery phase. We apply the group-based caching ideas we have developed for the centralized scenario to decentralized caching. The corresponding caching scheme is called the group-based decentralized coded caching (GBD). 

To simplify the notation, for $k=1, ..., K$, we define $W_{d_k,V}$ as the bits of file $W_{d_k}$, the file requested by user $U_k$, which have been placed exclusively in the caches of the users in set $V$, where $V \subset \left[ {1:K} \right]$. We also note that, all XOR operations used in this section are assumed to be zero-padded such that they all have the same length as the longest of their arguments. The GBD scheme is first demonstrated on a simple example. 

\begin{exmp}
Consider a decentralized coded caching system with $K=5$ users and $N=3$ files. Due to lack of coordination in the placement phase, similarly to the scheme proposed in \cite{MaddahAliDecentralized}, $MF/3$ random bits of each file are cached independently at each user during the placement phase. During the delivery phase, without loss of generality, by re-ordering the users, the following worst-case user demands are considered:
\begin{equation}\label{DemandsExample2} 
{d_k} =
\begin{cases} 
{{1},\quad 1 \le k \le 2,}\\
{{2},\quad 3 \le k \le 4,}\\
{{3},\quad k = 5.}
\end{cases}
\end{equation}
The contents are sent by the server in three parts during the delivery phase to satisfy the above demand combination. Below, we explain each part in detail. 

\begin{enumerate}[label=\bfseries Part \arabic*:,align=left]
\item In the first part, the bits of each requested file $W_i$, which are not cached anywhere, are delivered directly by the server, for $i \in \left[ 1:3 \right]$. Therefore, the server delivers the following bits in the first part of the delivery phase: ${W_{1,\left\{ \emptyset \right\} }}$, ${W_{2,\left\{ \emptyset \right\} }}$, ${W_{3,\left\{ \emptyset \right\} }}$.

\item In part 2, the bits of file $W_{d_k}$ requested by user $U_k$, which are only in the cache of user $U_j$ are delivered, for $k,j \in \left[ 1:5 \right]$ and $k \ne j$. In other words, the purpose of this part is to enable each user to decode the bits of its requested file which have been placed in the cache of only one other user. Accordingly, the delivery phase of the GBC scheme can be applied in this part. For the example under consideration, the following bits constitute the second part of the delivery phase: ${W_{1,\left\{1\right\}}} \oplus {W_{1,\left\{2\right\}}}$, ${W_{2,\left\{3\right\}}} \oplus {W_{2,\left\{4\right\}}}$, ${W_{1,\left\{3\right\}}} \oplus {W_{1,\left\{4\right\}}}$, ${W_{2,\left\{1\right\}}} \oplus {W_{2,\left\{2\right\}}}$, ${W_{1,\left\{4\right\}}} \oplus {W_{2,\left\{2\right\}}}$, ${W_{3,\left\{1\right\}}} \oplus {W_{3,\left\{2\right\}}}$, ${W_{1,\left\{5\right\}}} \oplus {W_{3,\left\{1\right\}}}$, ${W_{3,\left\{3\right\}}} \oplus {W_{3,\left\{4\right\}}}$, ${W_{2,\left\{5\right\}}} \oplus {W_{3,\left\{4\right\}}}$. 

\item The bits of the file requested by each user which are in the cache of more than one other user are delivered in the third part by exploiting the procedure proposed in \cite[Algorithm 1]{MaddahAliDecentralized}. The server delivers the following in the third and last part of the delivery phase: ${W_{1,\left\{ {2,3} \right\}}} \oplus {W_{1,\left\{ {1,3} \right\}}} \oplus {W_{2,\left\{ {1,2} \right\}}}$, ${W_{1,\left\{ {2,4} \right\}}} \oplus {W_{1,\left\{ {1,4} \right\}}} \oplus {W_{2,\left\{ {1,2} \right\}}}$, ${W_{1,\left\{ {2,5} \right\}}} \oplus {W_{1,\left\{ {1,5} \right\}}} \oplus {W_{3,\left\{ {1,2} \right\}}}$, ${W_{1,\left\{ {3,4} \right\}}} \oplus {W_{2,\left\{ {1,4} \right\}}} \oplus {W_{2,\left\{ {1,3} \right\}}}$, ${W_{1,\left\{ {3,5} \right\}}} \oplus {W_{2,\left\{ {1,5} \right\}}} \oplus {W_{3,\left\{ {1,3} \right\}}}$, ${W_{1,\left\{ {4,5} \right\}}} \oplus {W_{2,\left\{ {1,5} \right\}}} \oplus {W_{3,\left\{ {1,4} \right\}}}$, ${W_{1,\left\{ {2,3,4} \right\}}} \oplus {W_{1,\left\{ {1,3,4} \right\}}} \oplus {W_{2,\left\{ {1,2,4} \right\}}} \oplus {W_{2,\left\{ {1,2,3} \right\}}}$, ${W_{1,\left\{ {2,3,5} \right\}}} \oplus {W_{1,\left\{ {1,3,5} \right\}}} \oplus {W_{2,\left\{ {1,2,5} \right\}}} \oplus {W_{3,\left\{ {1,2,3} \right\}}}$, ${W_{1,\left\{ {2,4,5} \right\}}} \oplus {W_{1,\left\{ {1,4,5} \right\}}} \oplus {W_{2,\left\{ {1,2,5} \right\}}} \oplus {W_{3,\left\{ {1,2,4} \right\}}}$, ${W_{1,\left\{ {3,4,5} \right\}}} \oplus {W_{2,\left\{ {1,4,5} \right\}}} \oplus {W_{2,\left\{ {1,3,5} \right\}}} \oplus {W_{3,\left\{ {1,3,4} \right\}}}$, ${W_{1,\left\{ {2,3,4,5} \right\}}} \oplus {W_{1,\left\{ {1,3,4,5} \right\}}} \oplus {W_{2,\left\{ {1,2,4,5} \right\}}} \oplus {W_{2,\left\{ {1,2,3,5} \right\}}} \oplus {W_{3,\left\{ {1,2,3,4} \right\}}}$, ${W_{1,\left\{ {3,4} \right\}}} \oplus {W_{2,\left\{ {2,4} \right\}}} \oplus {W_{2,\left\{ {2,3} \right\}}}$, ${W_{1,\left\{ {3,5} \right\}}} \oplus {W_{2,\left\{ {2,5} \right\}}} \oplus {W_{3,\left\{ {2,3} \right\}}}$, ${W_{1,\left\{ {4,5} \right\}}} \oplus {W_{2,\left\{ {2,5} \right\}}} \oplus {W_{3,\left\{ {2,4} \right\}}}$, ${W_{1,\left\{ {3,4,5} \right\}}} \oplus {W_{2,\left\{ {2,4,5} \right\}}} \oplus {W_{2,\left\{ {2,3,5} \right\}}} \oplus {W_{3,\left\{ {2,3,4} \right\}}}$, ${W_{2,\left\{ {4,5} \right\}}} \oplus {W_{2,\left\{ {3,5} \right\}}} \oplus {W_{3,\left\{ {3,4} \right\}}}$.        
\end{enumerate} 

In this case, using the law of large numbers, the size of $W_{i,V}$, for $F$ large enough, can be approximated as
\begin{equation}\label{SizePortion} \left| {{W_{i,V}}} \right| \approx {\left( {\frac{M}{3}} \right)^{\left| V \right|}}{\left( {1 - \frac{M}{3}} \right)^{5 - \left| V \right|}}F,
\end{equation} 
for any set $V \subset \left[ {1:5} \right]$. For example, for a cache capacity of $M=1$, the delivery rate of the proposed coded caching scheme is $1.407$, while the schemes proposed in \cite{MaddahAliDecentralized} and \cite{KaiWanUncodedCaching}, can achieve the delivery rates $1.737$ and $1.473$, respectively.
\qed
\end{exmp}   

\subsection{Group-Based Decentralized Coded Coding (GBD) Scheme}
Here, we generalize the proposed decentralized group-based coded caching (GBD) scheme. To simplify the notation, for $n,m \in \mathcal Z$, and $V \subset \left[1:K \right]$, we use the notation $\left\{ {V,n} \right\}\backslash \left\{ m \right\}$ to represent the set of integers $V \cup \{n\} \backslash \{m\}$. 

In the placement phase, each user caches a random subset of $MF/N$ bits of each file independently. Since there are $N$ files, each of length $F$ bits, this placement phase satisfies the memory constraint.

\begin{algorithm}
\caption{Coded Delivery Phase of the GBD Scheme}
\label{DeliveryDecentralized}
\begin{algorithmic}[1]
\Statex
\Procedure {Delivery-Coded}{}
\State{\textbf{Part 1}: Delivering bits that are not in the cache of any user}
\For {$i = 1, 2, \ldots, N$}
\State{server delivers $\left( W_{{d_{S_{i-1} + 1}},\left\{ \emptyset \right\}}  \right)$}
\EndFor
\Statex
\State{\textbf{Part 2}: Delivering bits that are in the cache of only one user}
\State{server delivers $\left( {\bigcup\limits_{i = 1}^N {\bigcup\limits_{k = {S_{i - 1}} + 1}^{{S_i} - 1} {\left( {{W_{i,\left\{ k \right\}}} \oplus {W_{i,\left\{ k + 1 \right\}}}} \right)} } } \right)$.}
\State{server delivers $\bigcup\limits_{i = 1}^{N - 1} \bigcup\limits_{j = i + 1}^N \left( \bigcup\limits_{k = {S_{j - 1}} + 1}^{{S_j} - 1} {\left( {{W_{i,\left\{ k \right\}}} \oplus {W_{i,\left\{ {k + 1} \right\}}}} \right)} ,\qquad \qquad \qquad \qquad \qquad \qquad \qquad \right.$ $ \left. \qquad \qquad \qquad \qquad \qquad \qquad \qquad \qquad \bigcup\limits_{k = {S_{i - 1}} + 1}^{{S_i} - 1} {\left( {{W_{j,\left\{ k \right\}}} \oplus {W_{j,\left\{ {k + 1} \right\}}}} \right)} , {W_{i,\left\{ {{S_j}} \right\}}} \oplus W_{j,\left\{ {{S_i}} \right\}} \right) $.}
\Statex
\State{\textbf{Part 3}: Delivering bits that are in the cache of more than one user}
\For{$i = 1, 2,  \ldots, K - 2$ }
\For{$j = 2, 3, \ldots, K - i$ }
\For{$V \subset \left[ {i + 1:K} \right]: \left| V \right| = j$ }
\State server delivers $\left( \left( \mathop  \oplus \limits_{v \in V} W_{d_{v},\left\{ {V,i} \right\}\backslash \left\{v\right\}} \right) \oplus W_{d_i, V } \right)$
\EndFor
\EndFor
\EndFor
\EndProcedure
\Statex
\Procedure {Delivery-Random}{}
\For{$i = 1, 2, \ldots, N$}
\State {server delivers enough random linear combination of the bits of the file $W_i$ to the users requesting it in order to decode it}
\EndFor
\EndProcedure
\end{algorithmic}
\end{algorithm}

Similarly to Section \ref{GBCscheme}, without loss of generality, we can re-order the users and re-label the files such that the first $K_1$ users, referred to as group $G_1$, have the same request $W_1$, the next $K_2$ users, group $G_2$, request $W_2$, and so on so forth. As a result, in the delivery phase, we have
\begin{equation}\label{DemandsDecentralized} d_k = i,\quad \mbox{for $i=1, ..., N$, and $k=S_{i-1} + 1, ..., S_i$},
\end{equation} 
where $S_i$ is as defined in (\ref{S_sumK}).

There are two different procedures for the delivery phase, called DELIVERY-CODED and DELIVERY-RANDOM, presented in Algorithm \ref{DeliveryDecentralized}. The server follows either of the two, whichever achieves a smaller delivery rate.

Let us start with the DELIVERY-CODED procedure of Algorithm \ref{DeliveryDecentralized}, in which the contents are delivered in three distinct parts, as explained in the example above. The main idea behind the coded delivery phase is to deliver each user the missing bits of its requested file, that have been cached by $i$ user(s), $\forall i \in \left[ 0:K-1 \right]$. 

In the first part, the bits of each requested file that are not in the cache of any user are directly delivered by the server. Each transmitted content is destined for all the users in a separate group, which have the same request. 

In the second part, the bits of each requested file that have been cached by only one user are served to the users requesting the file by utilizing the GBC scheme developed for the centralized scenario. 

Each user $U_k$ in group $G_i$ requests $W_i$ and has already cached $W_{i,\left\{ k \right\}}$ for $i \in \left[1:N \right]$ and $k \in \left[ S_{i-1}+1:S_i \right]$. Having received the bits delivered in line 7 of Algorithm \ref{DeliveryDecentralized}, $U_k$ can decode all bits $W_{i,\left\{ l \right\}}$, $\forall l \in \left[ S_{i-1}+1:S_i \right]$. The users also receive the missing bits of their requested files having been cached by a user in a different group; that is, by receiving $\bigcup\limits_{k = S_{i - 1} + 1}^{S_i - 1} \left( W_{j,\left\{k\right\}} \oplus W_{j,\left\{k + 1\right\}} \right)$, $\bigcup\limits_{k = {S_{j - 1}} + 1}^{{S_j} - 1} {\left( {{W_{i,\left\{k\right\}}} \oplus {W_{i,\left\{k + 1\right\}}}} \right)}$, and ${W_{i,\left\{S_j\right\}}} \oplus {W_{j,\left\{S_i\right\}}}$, each user in groups $G_i$ and $G_j$ can decode the bits of its request which have been placed in the cache of users in the other group, for $i=1, ..., N-1$ and $j=i+1, ..., N$.

In the last part, the same procedure as the one proposed in \cite{MaddahAliDecentralized} is performed for the missing bits of each file that have been cached by more than one user. Hence, following the DELIVERY-CODED procedure presented in Algorithm \ref{DeliveryDecentralized}, each user recovers all the bits of its desired file.

The second delivery procedure, DELIVERY-RANDOM, is as presented in In the last part, the same procedure as the one proposed in \cite{MaddahAliDecentralized}, and the server delivers enough random linear combinations of the bits of each requested file targeted for the users in the same group requesting that file to decode it. 

\subsection{Delivery Rate Analysis}
In the following, we derive an expression for the delivery rate-cache capacity trade-off of the proposed GBD scheme, denoted by $R^{\rm{D}}_{\rm{GBD}}(M)$. All discussions in this section are stated assuming that $M \le N$, and $F$ is large enough. For each randomly chosen bit of each file, the probability of having been cached by each user is $M/N$. Since the contents are cached independently by each user in the placement phase, a random bit of each file is cached exclusively by the users in set $V \subset \left[ 1:K \right]$ (and no user outside this set) with probability $\left( M/N \right)^{\left| V \right|} \left( {1 - M/N} \right)^{K - \left| V \right|}$. 

Similarly to the arguments presented in Section \ref{AnalysisGBC}, when $N < K$, the worst-case user demands correspond to the scenario in which each file is requested by at least one user, i.e., $K_i > 0$, $\forall i \in \left[ {1:N} \right]$. On the other hand, when $N \ge K$, without loss of generality, the worst-case user demands can be assumed as $d_k = k$, $\forall k \in \left[ {1:K} \right]$.

When $N \ge K$, for the worst-case user demands described above, similar conditions as the GBC scheme hold, and the GBD scheme achieves the same delivery rate as the decentralized caching scheme proposed in \cite{MaddahAliDecentralized}, called the decentralized MAN scheme, i.e., $R^{\rm{D}}_{\rm{GBD}}\left( M \right) = R^{\rm{D}}_{\rm{MAN}}\left( M \right)$. 

Next we consider the more interesting $N < K$ case. We start with the first procedure of Algorithm \ref{DeliveryDecentralized}. In part 1, the server delivers $N$ groups of bits, each group corresponding to a different file, which have not been cached by any user. The delivery rate-cache capacity trade-off for this part of Algorithm \ref{DeliveryDecentralized}, $R^{\rm{D}}_{1}(M)$, can be evaluated as  
\begin{equation}\label{OurDeliveryRateDecentralized1} {R^{\rm{D}}_{{1}}}\left( M \right) = N{\left( {1 - \frac{M}{N}} \right)^K}.
\end{equation} 
For the second part, we first need to find the total number of XOR-ed contents delivered by the server, each of which has length $\left( M/N \right) \left( 1-M/N \right)^{K-1}F$ bits (the length of each XOR-ed content is equivalent to the number of bits of a file that have been cached by only one user). Since the delivery phase of the GBC scheme is applied for this part, based on \eqref{OurDeliveryRatePointTheorem}, it can be easily evaluated that $\left( {NK - N(N+1)/2} \right)$ XOR-ed contents are served\footnote{Note that, in the delivery phase of the GBC scheme for $M = N/K$, a total of $\left( NK - N(N+1)/2 \right)$ XOR-ed contents, each of size $F/K$ bits, are delivered, which results in $R^{\rm{C}}_{\rm{GBC}}( N/K )=N-N(N+1)/2K$.}. Thus, the delivery rate-cache capacity trade-off corresponding to the second part of the first procedure of Algorithm \ref{DeliveryDecentralized} is given by
\begin{equation}\label{OurDeliveryRateDecentralized2} {R^{\rm{D}}_{{2}}}\left( M \right) = \left( {NK - \frac{{N\left( {N + 1} \right)}}{2}} \right)\left( {\frac{M}{N}} \right){\left( {1 - \frac{M}{N}} \right)^{K - 1}}.
\end{equation}   

The last part of the proposed delivery phase is equivalent to the first delivery phase procedure proposed in \cite[Algorithm 1]{MaddahAliDecentralized}, with which each user can decode the bits of its requested file, which have been cached by more than one user. Following the same technique as \cite{MaddahAliDecentralized}, the delivery rate corresponding to this part is derived as follows:
\begin{align}\label{OurDeliveryRateDecentralized3} {R^{\rm{D}}_{{3}}}\left( M \right) = & \sum\limits_{i = 1}^{K - 2} {\sum\limits_{j = 2}^{K - i} {\binom{K-i}{j}{{\left( {\frac{M}{N}} \right)}^j}{{\left( {1 - \frac{M}{N}} \right)}^{K - j}}} }  \nonumber\\ 
= & - \left( {K - 2} \right){\left( {1 - \frac{M}{N}} \right)^K} - \frac{1}{2}\left( {K - 2} \right)\left( {K + 1} \right){\left( {\frac{M}{N}} \right)}{\left( {1 - \frac{M}{N}} \right)^{K - 1}} \nonumber\\
&~~~~~~ + \frac{N}{M}\left( {1 - {{\left( {1 - \frac{M}{N}} \right)}^{K - 1}}} \right) - 1.
\end{align}   

The overall delivery rate-cache capacity trade-off for the first procedure of Algorithm \ref{DeliveryDecentralized}, $R^{\rm{D}}_{\rm{GBD}_1}(M)$, is the sum of the delivery rates of all three parts in (\ref{OurDeliveryRateDecentralized1}), (\ref{OurDeliveryRateDecentralized2}), and (\ref{OurDeliveryRateDecentralized3}), and is evaluated as
\begin{align} {R^{\rm{D}}_{{\rm{GBD}_1}}}\left( M \right) &= {R^{\rm{D}}_{{1}}}\left( M \right) + {R^{\rm{D}}_{{2}}}\left( M \right) + {R^{\rm{D}}_{{3}}}\left( M \right) \nonumber \\
    &= \frac{N}{M} - 1 - \left[ {\left( {K - N - 2} \right)\left( {1 + \frac{1}{2}\left( {K - N - 1} \right)\frac{M}{N}} \right) + \frac{N}{M}} \right]{\left( {1 - \frac{M}{N}} \right)^{K - 1}}. \label{OurDeliveryRateDecentralizedFirstProcedure}
\end{align} 

For the worst-case user demands, it is shown in \cite[Appendix A]{MaddahAliDecentralized} that the second delivery procedure achieves the same delivery rate-cache capacity trade-off as the uncoded scheme, given by
\begin{equation}\label{UncodedDeliveryRate} {R^{\rm{D}}_{\rm{GBD}_2}}\left( M \right) = K\left( {1 - \frac{M}{N}} \right)\min \left\{ {1,\frac{N}{K}} \right\}.
\end{equation}

The delivery rate of the proposed GBD scheme is evidently the minimum value of $R^{\rm{D}}_{\rm{GBD}_1}(M)$ and $R^{\rm{D}}_{\rm{GBD}_2}(M)$, which is presented in the following theorem.

\begin{theorem}  
In a decentralized caching system with $K$ users requesting contents from a server with $N$ files in its database, when $N < K$, the following delivery rate-cache capacity trade-off is achievable by the GBD scheme:
\begin{align}\label{DeliveryRateAlgorithm2} 
& R^{\rm{D}}_{\rm{GBD}}\left( M \right) = \left( {1 - \frac{M}{N}} \right)\nonumber\\ 
& \quad \; \times \min \left\{ \frac{N}{M} - \left[ {\left( {K - N - 2} \right)\left( {1 + \frac{1}{2}\left( {K - N - 1} \right)\frac{M}{N}} \right) + \frac{N}{M}} \right]{{\left( {1 - \frac{M}{N}} \right)}^{K - 2}},N \right\}.
\end{align}
\end{theorem}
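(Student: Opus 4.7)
The plan is to derive the achievable rate by separately analyzing the two delivery subroutines of Algorithm \ref{DeliveryDecentralized} and taking their pointwise minimum. Under the worst-case demand for $N<K$ (each file requested by at least one user), the independent random placement makes the probability that any given bit of any file is cached exclusively by the users in a subset $V\subseteq[1:K]$ equal to $(M/N)^{|V|}(1-M/N)^{K-|V|}$. By the law of large numbers, $|W_{i,V}|$ concentrates to this value up to an arbitrarily small fraction of $F$ for $F$ large enough, so the rate computations reduce to counting the number and length of transmissions.

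For DELIVERY-CODED I would handle the three parts in turn. Part 1 delivers, for each of the $N$ files with at least one requester, the subfile of bits cached by no user; this contributes $R_1^{\rm{D}}(M)=N(1-M/N)^K$. Part 2 instantiates the centralized GBC delivery phase of Algorithm \ref{DeliveryCentralized} on the "singly-cached" subfiles $\{W_{i,\{k\}}\}$: the XOR combinatorics and the within-group/between-group decodability argument are exactly those used to establish Theorem \ref{TheoremGBC}, so the number of XOR-combinations transmitted is $NK-N(N+1)/2$, each of length $(M/N)(1-M/N)^{K-1}F$, giving \eqref{OurDeliveryRateDecentralized2}. Part 3 runs the decentralized delivery subroutine of \cite{MaddahAliDecentralized} on bits cached by two or more users; its rate is $\sum_{i=1}^{K-2}\sum_{j=2}^{K-i}\binom{K-i}{j}(M/N)^j(1-M/N)^{K-j}$, which I would reduce to the closed form in \eqref{OurDeliveryRateDecentralized3} by splitting off the $j=0$ and $j=1$ terms of the full binomial sum $\sum_{j=0}^{K-i}\binom{K-i}{j}(M/N)^j(1-M/N)^{K-j}=(1-M/N)^i$ and then swapping the order of summation, using $\sum_{i=1}^{K-2}(1-M/N)^i$ as a geometric series.

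Summing the three pieces and collecting the $(1-M/N)^{K-1}$ coefficients yields $R_{\rm{GBD}_1}^{\rm{D}}(M)$ as in \eqref{OurDeliveryRateDecentralizedFirstProcedure}; factoring out $(1-M/N)$ reproduces the first argument of the $\min$ in the theorem since $(1-M/N)(N/M)=N/M-1$ and $(1-M/N)(1-M/N)^{K-2}=(1-M/N)^{K-1}$. For the second argument, DELIVERY-RANDOM sends enough random linear combinations per file for the users in each group to decode their request; by \cite[Appendix A]{MaddahAliDecentralized}, its worst-case rate equals $K(1-M/N)\min\{1,N/K\}$, which under $N<K$ simplifies to $N(1-M/N)$, matching the $(1-M/N)\cdot N$ term. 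Taking the minimum of the two procedures delivers the stated expression, and decodability of each requested file is guaranteed piecewise (uncached bits from Part 1, singly-cached bits from Part 2 via the GBC argument, and multiply-cached bits from Part 3 via the decentralized MAN argument).

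The main obstacle will be the algebraic bookkeeping in Part 3: the double sum must be reduced to a form compatible with the $N/M$ and $(1-M/N)^{K-1}$ terms coming from Parts 1 and 2 so that, when added, the coefficient collapses to $(K-N-2)\bigl(1+(K-N-1)M/(2N)\bigr)+N/M$. A careful index check is needed at the boundary between the singly-cached content (handled by Part 2 with GBC) and the multiply-cached content (handled by Part 3 with decentralized MAN) to ensure no bit is delivered twice and no bit is missed, since Part 3 in Algorithm \ref{DeliveryDecentralized} is indexed to start at $j=2$ precisely to avoid overlap with Part 2.
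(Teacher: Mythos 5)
Your proposal is correct and follows essentially the same route as the paper: decompose DELIVERY-CODED into the three parts with rates \eqref{OurDeliveryRateDecentralized1}--\eqref{OurDeliveryRateDecentralized3} (reusing the GBC count of $NK-N(N+1)/2$ XORs for the singly-cached bits and the decentralized MAN argument for the multiply-cached bits), sum to obtain \eqref{OurDeliveryRateDecentralizedFirstProcedure}, and take the minimum with the DELIVERY-RANDOM rate $N(1-M/N)$. Your closed-form reduction of the Part~3 double sum via the full binomial identity and the geometric series is exactly the computation the paper performs implicitly.
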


\begin{remark}
We remark that in the decentralized caching model, it is assumed that each user sends its cache content together with its request to the server at the beginning of the delivery phase. In this way, by knowing the number of popular files in the database, when $N<K$, the server can decide to perform the delivery phase procedure that requires the smallest delivery rate.
\end{remark}

\subsection{Comparison with the State-of-the-Art}
The difference between the proposed GBD scheme and the scheme investigated in \cite[Algorithm 1]{MaddahAliDecentralized} lies in the first procedure of the delivery phase when $N < K$. As a result, to compare the two schemes, the delivery rate of the proposed GBD scheme for the DELIVERY-CODED procedure, i.e., $R_{\rm{GBD}_1}^{\rm{D}}(M)$ given in \eqref{OurDeliveryRateDecentralizedFirstProcedure}, should be compared with the delivery rate of the first delivery phase procedure stated in \cite[Algorithm 1]{MaddahAliDecentralized}, which is given by
\begin{equation}\label{DeliveryRateDecentralizedFirstProcedureMaddahAli} R_{\rm{MAN}_1}^{\rm{D}}\left( M \right) = \left( \frac{N}{M} - 1 \right) \left( 1 - \left( {1 - \frac{M}{N}} \right)^K \right).
\end{equation} 
For $M \le N$, we have ${\left( {1 - M/N} \right)^K} \le {\left( {1 - M/N} \right)^{K - 1}}$. Hence, to show that ${R_{\rm{GBD}_1}^{\rm{D}}}\left( M \right) \le {R_{\rm{MAN}_1}^{\rm{D}}}\left( M \right)$, it suffices to prove that
\begin{align}\label{DeliveryRateDecentralizedFirstProcedureCompare} \left( {K - N - 2} \right)\left( {1 + \frac{1}{2}\left( {K - N - 1} \right)\frac{M}{N}} \right) \ge  - 1, 
\end{align}   
which holds for $N < K$. Therefore, compared to the decentralized coded caching scheme proposed in \cite[Algorithm 1]{MaddahAliDecentralized}, the GBD scheme requires a smaller delivery rate, if $N < K$.

\begin{figure}[!t]
\centering
\includegraphics[scale=0.7]{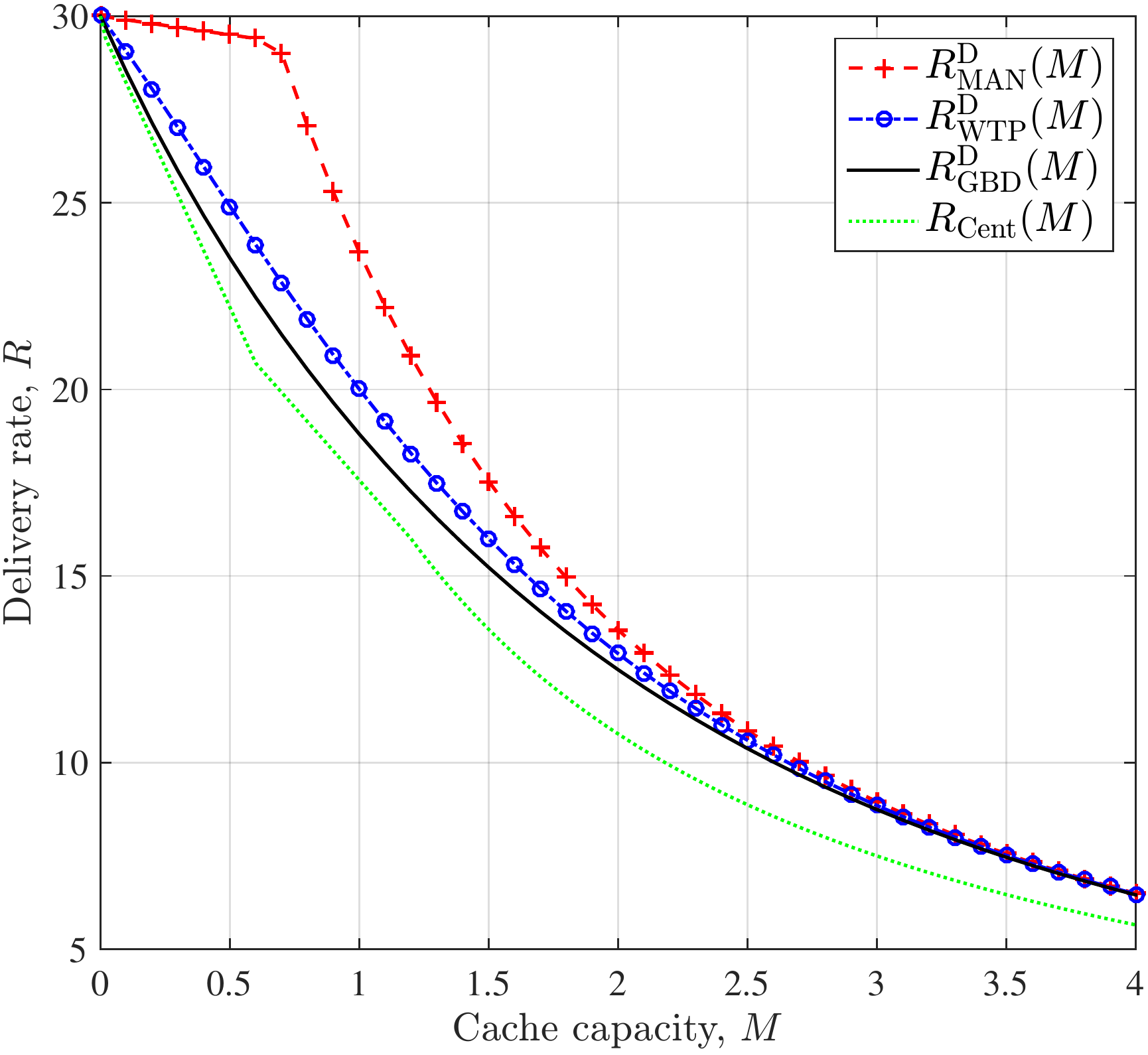}
\caption{Delivery rate-cache capacity trade-off in the decentralized setting for the GBD, MAN and WTP schemes, for $N=30$ and $K=50$. In this scenario, we have $\hat t = t^* = 2$, and the best centralized performance can be achieved by memory-sharing between the CFL scheme for $M = 1/K$, the GBC scheme for $M = N/K$, and the MAN scheme for cache capacities $M = lN/K$, where $l \in \left[ 2:K \right]$.} 
\label{DecentralizedN30K50}
\end{figure} 

Next, we compare the GBD scheme with the state-of-the-art caching schemes numerically. In Fig. \ref{DecentralizedN30K50}, the delivery rate of the proposed GBD scheme is compared with the decentralized caching schemes proposed in \cite{MaddahAliDecentralized} and \cite{KaiWanUncodedCaching}, referred to as MAN and WTP, respectively, for $N=30$ and $K=50$. The superiority of the proposed GBD scheme over the state-of-the-art caching schemes, especially for relatively small cache capacities, is visible in the figure. We also include in the figure the best achievable centralized coded caching scheme for this setting, the delivery rate-cache capacity trade-off of which is denoted by $R_{\rm{Cent}}(M)$. This curve is obtained through memory-sharing between the CFL scheme for cache capacity $M = 1/K$, the proposed GBC scheme for $M = N/K$, and the centralized MAN scheme for $M = lN/K$, where $l \in \left[ \hat t:K \right]$ (note that, in this scenario, $\hat t = t^* = 2$). Although the delivery rate of the optimal decentralized caching scheme might not be lower bounded by that of the considered centralized caching scheme (since the centralized caching scheme under consideration is not optimal), the difference between the delivery rates of the centralized and decentralized schemes roughly indicates the loss due to the decentralization of the coded caching schemes considered here. We observe that the delivery rate of the proposed GBD scheme is very close to the performance of the best known centralized scheme, particularly for small cache capacities. 

\section{Conclusion}\label{Conclusion}
We have considered a symmetric caching system with $K$ users and $N$ files, where each file has the same size of $F$ bits, and each user has the same cache capacity of $MF$ bits, that is, a cache that is sufficient to store $M$ files. The system considered here models wireless networks, in which the caches are filled over off-peak periods without any cost constraint or rate limitation (apart from the limited cache capacities), but without knowing the user demands; and all the user demands arrive (almost) simultaneously, and they are served simultaneously through an error-free shared link. We have proposed a novel group-based centralized (GBC) coded caching scheme for a cache capacity of $M = N/K$, which corresponds to the case in which the total cache capacity distributed across the network is sufficient to store all the files in the database. In the centralized scenario, each file is distributed evenly across all the users, and the users are grouped based on their requests. Our algorithm satisfies user demands in pairs, making sure that each transmitted information serves all the users in a group, that is, users requesting the same file. We have shown that the proposed caching scheme outperforms the best achievable schemes in the literature in terms of the delivery rate. Then the improvement has been extended to a larger range of cache capacities through memory-sharing with the existing achievable schemes in the literature. 

We have next employed the GBC scheme in the decentralized setting. Since the user identities are not known in the decentralized caching scenario, it is not possible to distribute the contents among users. Allowing the users to cache bits of contents randomly, we will have different parts of each content cached by a different set of users. Our decentralized caching scheme, GBD, adopts the proposed group-based centralized caching scheme in order to efficiently deliver those pats of the files that have been cached by only a single user. We have shown that the GBD scheme also achieves a smaller delivery rate than any other known scheme in the literature. We believe that the proposed group-based delivery scheme is particularly appropriate for non-uniform user demands, which is currently under investigation.

\appendices

\section{Proof of Theorem \ref{TheoremGBC}}\label{ProofFirstTheorem}
To prove Theorem \ref{TheoremGBC}, we first go through the coded delivery phase presented in Algorithm \ref{DeliveryCentralized}, and show that all user requests are satisfied at the end of the delivery phase. First part of this algorithm enables each user to obtain the subfiles of its requested file which are in the cache of all other users in the same group. We consider the first group, i.e., $i=1$ in line 2 of Algorithm \ref{DeliveryCentralized}, which includes the users that demand $W_1$. In this case, the XOR-ed contents $W_{1,k} \oplus W_{1,k+1}$, for $k \in \left[ {1:{K_1} - 1} \right]$, are delivered by the server. Having access to the subfile $W_{1,k}$ locally in its cache, each user $U_k$, for $k \in \left[ {1:{K_1}} \right]$, can decode all the remaining subfiles $W_{1,j}$, for $j \in \left[ {1:{K_1}} \right]\backslash \left\{ {k} \right\}$. Thus, a total number of $(K_{1} - 1)$ XOR-ed contents, each of size $\frac{F}{K}$ bits, are delivered by the server for the users in group $G_1$. Similarly, the second group ($i=2$ in line 2 of Algorithm \ref{DeliveryCentralized}), containing the users requesting file $W_2$, the XOR-ed contents $W_{2,k} \oplus W_{2,k+1}$, for $k \in \left[ {K_{1} + 1:{K_1 + K_{2}} - 1} \right]$, are sent by the server. With subfile $W_{2,k}$ available locally at user $U_k$, for $k \in \left[ {K_{1} + 1:{K_1 + K_{2}}} \right]$, user $U_k$ can obtain the missing subfiles $W_{2,j}$, for $j \in \left[ {K_{1} + 1:{K_1 + K_{2}}} \right] \backslash \left\{ {k} \right\}$. Hence, a total of $(K_2 - 1)F/K$ bits are served for the users in $G_2$, and so on so forth. Accordingly, for the users belonging to group $G_i$, $(K_i - 1)F/K$ bits are delivered by the server, for $i=1, ..., N$, and the total number of bits transmitted by the server in the first part of the coded delivery phase presented in Algorithm \ref{DeliveryCentralized} is given by
\begin{equation}\label{DeliveryRateFirstProcedure} \frac{F}{K}\sum\limits_{i = 1}^N {\left( {{K_i} - 1} \right)}  = \left( {K - N} \right)\frac{F}{K}.   
\end{equation}

In the second part of Algorithm \ref{DeliveryCentralized}, each user in group $G_i$, for $i \in \left[ 1:N \right]$, will decode the missing subfiles of its requested file, which are in the cache of users belonging to groups $j \in \left[ 1:N \right] \backslash \left\{ {i} \right\}$. We first start with $i=1$ and $j=2$ in lines 7 and 8, respectively. The XOR-ed contents $W_{1,k} \oplus W_{1,k+1}$, for $k \in \left[ {{K_1} + 1:{K_1} + {K_2} - 1} \right]$, i.e., the subfiles of $W_1$ cached by users in group $G_2$, are delivered in line 9. In line 10, the XOR-ed contents $W_{2,k} \oplus W_{2,k+1}$, for $k \in \left[ 1:{K_1} - 1 \right]$, i.e., the subfiles of $W_2$ cached by users in group $G_1$, are delivered by the server. Finally, by delivering $W_{1,K_1 + K_2} \oplus W_{2,K_1}$ in line 11, and having already decoded $W_{2,k}$ ($W_{1,k}$), each user $U_k$ in $G_1$ ($G_2$) can recover the missing subfiles of its requested file $W_1$ ($W_2$) which are in the cache of users in $G_2$ ($G_1$), for $k \in \left[ 1:{K_1} \right]$ (for $k \in \left[ {{K_1} + 1:{K_1} + {K_2}} \right]$). In this particular case, the number of bits delivered by the server in lines 9, 10, and 11 are $(K_2 - 1)F/K$, $(K_1 - 1)F/K$, and $F/K$, respectively, which adds up to a total number of $(K_1 + K_2 - 1)F/K$ bits. In a similar manner, the subfiles can be exchanged between users in groups $G_i$ and $G_j$, for $i \in \left[ 1:N-1 \right]$ and $j \in \left[ i + 1:N \right]$, by delivering a total of $(K_i + K_j - 1)F/K$ bits through sending the XOR-ed contents stated in lines 9, 10, and 11 of Algorithm \ref{DeliveryCentralized}. Hence, the total number of bits delivered by the server in the second part of the coded delivery phase is given by   
\begin{equation}\label{DeliveryRateSecondProcedure} \frac{F}{K}\sum\limits_{i = 1}^{N - 1} {\sum\limits_{j = i + 1}^N {\left( {{K_i} + {K_j} - 1} \right)} }  = \left( {N - 1} \right)\left( {K - \frac{N}{2}} \right)\frac{F}{K}.   
\end{equation}
By summing up \eqref{DeliveryRateFirstProcedure} and \eqref{DeliveryRateSecondProcedure}, the delivery rate of the GBC scheme is given by
\begin{equation}\label{DeliveryRateEndProcedure} {R^{\rm{C}}_{\rm{GBC}}}\left( {\frac{N}{K}} \right) = N - \frac{{N\left( {N + 1} \right)}}{{2K}}.
\end{equation}

\section{Proof of Inequality ${R_{\rm{GBC}}^{\rm{C}}}\left( N/K \right) \le f\left( N,K,t \right)$} 
\label{CentralizedImprovement}
In order to show that ${R_{\rm{GBC}}^{\rm{C}}}\left( N/K \right) \le R_b^{\rm{C}} \left( N/K \right)$, it suffices to prove that ${R_{\rm{GBC}}^{\rm{C}}}\left( N/K \right) \le f\left( N,K,t \right)$, $\forall t \in \left[ 1:K \right]$. Thus, for $N < K$ and $t \in \left[ 1:K \right]$, we need to determine 
\begin{equation}\label{AppendixCentralized1} N - \frac{{N\left( {N + 1} \right)}}{{2K}} \le {\frac{{\left( {N - 1} \right)\left( {K - t} \right)}}{{\left( {t + 1} \right)\left( {tN - 1} \right)}} + {N^2}\left( {1 - \frac{1}{K}} \right)\left( {\frac{{t - 1}}{{tN - 1}}} \right)}.
\end{equation} 
After some mathematical manipulations, the inequality \eqref{AppendixCentralized1} can be simplified to 
\begin{equation}\label{AppendixCentralized2} \left( {N - 1} \right)\left[ {{N^2}{t^2} + \left( {N + 1} \right)\left( {N - 2K} \right)t + 2{K^2} + N - 2KN} \right] \ge 0.
\end{equation}   
We define function $h:\left( \left[ {1:K} \right] \to \mathcal R \right)$ as follows:
\begin{equation}\label{AppendixCentralized3} h(t) \buildrel \Delta \over =  {{N^2}{t^2} + \left( {N + 1} \right)\left( {N - 2K} \right)t + 2{K^2} + N - 2KN},
\end{equation}  
and we need to show that $(N-1)h(t) \ge 0$. We prove this inequality for two following cases: $K \le N(N+1)/2$, and $K > N(N+1)/2$. First, let us start with $K \le N(N+1)/2$. We can rewrite $h(t)$ as follows:
\begin{equation}\label{AppendixCentralized4} h\left( t \right) = {\left( {K - tN} \right)^2} + \left( {{N^2} + N - 2K} \right)t + {K^2} + N - 2KN.
\end{equation} 
Since $t \ge 1$ and $K \le N(N+1)/2$, we have
\begin{align}\label{AppendixCentralized5} h\left( t \right) \ge& {\left( {K - tN} \right)^2} + {K^2} + {N^2} - 2KN + 2N + 2K = {\left( {K - tN} \right)^2} + \left( K - N \right) \left( K - N - 2 \right).
\end{align} 
For $t \in \left[ {1:K} \right]$ and $K \ge N + 1$, using \eqref{AppendixCentralized5}, it can be easily verified that 
\begin{equation}\label{AppendixCentralized6} \left( N - 1 \right) h\left( t \right) \ge \left( N - 1 \right) \left( \left( K - tN \right)^2 + \left( K - N \right) \left( K - N - 2 \right) \right) \ge 0.
\end{equation}
When $K > N(N+1)/2$, the proof is provided for $N \ge 3$. In this case, we have $K \ge 2N$. To prove that $h(t) \ge 0$, we first define a polynomial function of degree two $g: \left( \mathcal R^+ \to \mathcal R \right)$ as follows:
\begin{equation}\label{AppendixCentralized7} g(x) \buildrel \Delta \over =  {{N^2}{x^2} + \left( {N + 1} \right)\left( {N - 2K} \right)x + 2{K^2} + N - 2KN},
\end{equation}   
and then show that $g(x) \ge 0$, $\forall x > 0$. To have $g(x) \ge 0$, $\forall x > 0$, the following inequality should be held
\begin{equation}\label{AppendixCentralized8} {\left( {N + 1} \right)^2}{\left( {N - 2K} \right)^2} - 4{N^2}\left( {2{K^2} + N - 2KN} \right) \le 0,
\end{equation}   
which is equivalent to 
\begin{equation}\label{AppendixCentralized9} 4\left( {{{\left( {N + 1} \right)}^2} - 2{N^2}} \right)\left[ {{K^2} - KN - \frac{{{N^2}{{\left( {N - 1} \right)}^2}}}{{4\left( {2{N^2} - {{\left( {N + 1} \right)}^2}} \right)}}} \right] \le 0.
\end{equation}  
For $N \ge 3$, we have
\begin{equation}\label{AppendixCentralized10} {\left( {N + 1} \right)^2} - 2{N^2} = 1 - N\left( {N - 2} \right) < 0.
\end{equation}  
As a result, inequality \eqref{AppendixCentralized9} can be interpreted as
\begin{equation}\label{AppendixCentralized11} {{K^2} - KN - \frac{{{N^2}{{\left( {N - 1} \right)}^2}}}{{4\left( {2{N^2} - {{\left( {N + 1} \right)}^2}} \right)}}} \ge 0.
\end{equation}   
It can be easily verified that 
\newcommand\myfirstinequality{\mathrel{\overset{\makebox[0pt]{\mbox{\normalfont\tiny\sffamily (a)}}}{\le}}}
\newcommand\mysecondinequality{\mathrel{\overset{\makebox[0pt]{\mbox{\normalfont\tiny\sffamily (b)}}}{\le}}}
\begin{equation}\label{AppendixCentralized12}
\frac{{{N^2}{{\left( {N - 1} \right)}^2}}}{{4\left( {2{N^2} - {{\left( {N + 1} \right)}^2}} \right)}} \myfirstinequality 2{N^2} \mysecondinequality KN,
\end{equation}
where the inequalities (a) and (b) come from the fact that $N \ge 3$ and $K \ge 2N$, respectively. Using inequality \eqref{AppendixCentralized12} in \eqref{AppendixCentralized11}, we have
\begin{equation}\label{AppendixCentralized13} {K^2} - KN - \frac{{{N^2}{{\left( {N - 1} \right)}^2}}}{{4\left( {2{N^2} - {{\left( {N + 1} \right)}^2}} \right)}} \ge {K^2} - 2KN \ge 0,
\end{equation} 
where the first inequality is due to the fact that $K \ge 2N$. Hence, when $N \ge 3$ and $K \ge N(N+1)/2$, we have $g(x) \ge 0$, $\forall x>0$, which we conclude $h(t) \ge 0$.

\bibliographystyle{IEEEtran}
\bibliography{Report}

\end{document}